\definecolor{tan} {RGB} {210,180,140}
\definecolor{burntorange} {rgb} {0.746,0.339,0}
\definecolor{violet} {RGB} {168, 50, 117}
\definecolor{green} {RGB} {76,126,22}
\newcommand{\lineComment}[1]{\LComment{ #1}}
\newcommand{\rightComment}[1]{\Comment{ #1}}
\newtheorem{theorem}{Theorem}
\newtheorem{lemma}[theorem]{Lemma}
\newtheorem*{lemma*}{Lemma}
\newtheorem*{theorem*}{Theorem}
\newtheorem{definition}{Definition}
\theoremstyle{remark}
\newcommand{\congest}{CONGEST\xspace}
\title{Algorithms for Distance Sensitivity Oracles and other Graph Problems on the PRAM\footnote{Authors' affiliation: The University of Texas at Austin, USA; email:  {\tt vigneshm@cs.utexas.edu, vlr@cs.utexas.edu}. This work was supported in part by NSF grant CCF-2008241.}
\author {Vignesh Manoharan  and Vijaya Ramachandran}
}
\date{}
\begin{document}

\maketitle

\begin{abstract}
   % \vignesh{Following paragraph from SPAA proceedings.}
   The distance sensitivity oracle (DSO) problem asks us to preprocess a given graph $G=(V,E)$ in order to answer queries of the form $d(x,y,e)$, which denotes the shortest path distance in $G$ from vertex $x$ to vertex $y$ when edge $e$ is removed. This is an important problem for network communication, and it has been extensively studied in the sequential setting~\cite{DemetrescuTCR08dso,BernsteinK09dso,WeimannY13,Ren22} and recently in the distributed CONGEST model~\cite{ManoharanR25dso}. However, no prior DSO results tailored to the parallel setting were known. 

   % \vignesh{Added last line about other problems.}
   We present the first PRAM algorithms to construct DSOs in directed weighted graphs, that can answer a query in $O(1)$ time with a single processor after preprocessing. We also present the first work-optimal PRAM algorithms for other graph problems that belong to the sequential $\tilde{O}(mn)$ fine-grained complexity class: Replacement Paths, Second Simple Shortest Path, All Pairs Second Simple Shortest Paths and Minimum Weight Cycle. 
\end{abstract}

\section{Introduction}
% \vignesh{Following paragraphs are from  PRAM DSO proceedings}
In a network modeled by a graph $G=(V,E)$, we investigate the problem of computing shortest path distances when an edge $e \in E$ fails. Specifically, we need to answer queries $d(s,t,e)$ for $s,t\in V, e\in E$, where $d(s,t,e)$ is the shortest path distance from $s$ to $t$  in $G$ when edge $e$ is removed. This is a fundamental problem in network communication for routing under an arbitrary edge failure.

Let $|V|=n, |E|=m$. Since explicitly computing all distances would be prohibitively expensive due to output size, we instead consider Distance Sensitivity Oracles (DSOs). In the DSO problem, we first preprocess the graph $G$ to construct an oracle that stores certain information in order to answer queries quickly. Note that without any preprocessing, we can answer a query using a single source shortest path (SSSP) computation. The goal in DSO construction is to obtain a faster query time, ideally constant query time, while keeping the preprocessing cost reasonable.

The DSO problem has received considerable attention in the sequential setting. Starting from~\cite{DemetrescuTCR08dso}, algorithms with different tradeoffs between query time and preprocessing time were obtained, culminating in an efficient algorithm with preprocessing time $\tilde{O}(mn)$, which constructs a distance sensitivity oracle of size $\tilde{O}(n^2)$ that can answer any query in $O(1)$ time~\cite{BernsteinK09dso}. This construction nearly matches the best runtime of APSP. Other algorithms addressed DSO for dense graphs~\cite{Ren22}, and higher query time~\cite{WeimannY13}. Recently, the DSO problem was studied in the distributed CONGEST model~\cite{ManoharanR25dso} where the input graph models the communication network itself. 

Despite the importance of the DSO problem, no non-trivial DSO constructions are known for the parallel setting. 
In this paper, we present three PRAM algorithms for computing DSOs in directed weighted graphs. All our algorithms involve a preprocessing step, after which queries can be answered in $O(1)$ time using a single processor. Our algorithms have a range of tradeoffs between work and parallel time used for preprocessing.

% \vignesh{Following two paragraphs are added, to transition to other problems}
While our main focus in this paper is on the DSO problem, we address other related graph problems as well. If the source and target vertices $s,t\in V$ are fixed, the problem of finding all distances $d(s,t,e)$ is known as the Replacement Paths (RPaths) problem. The closely related Second Simple Shortest Paths problem (2-SiSP) requires us to compute a $s$-$t$ path (for fixed vertices $s,t\in V$) that differs from an initial input $s$-$t$ shortest path $P_{st}$: we need to compute $d_2(s,t) = \min_{e \in P_{st}} d(s,t,e)$. These problems have been studied extensively in the sequential setting~\cite{Yen1971,RodittyZ12,KatohIM82}, and recently in the distributed CONGEST setting~\cite{ManoharanR24rpaths}. Extending to all vertex pairs, we consider the All Pairs Second Simple Shortest Paths problem (2-APSiSP), where we need to compute the 2-SiSP distance $d_2(s,t)$ for all $s,t \in V$ - this problem has been studied in the sequential setting as well~\cite{AgarwalR16sisp}. These graph problems (in directed weighted graphs) belong to the sequential fine-grained $\tilde{O}(mn)$-time complexity class, introduced in~\cite{AgarwalR18finegrained}. This inspires us to consider another graph problem in this class, Minimum Weight Cycle (MWC), which is  the starting point of hardness in this class. 

% \vignesh{Edited}
These related problems have been unexplored in the PRAM setting, and we present work-optimal algorithms for  RPaths, 2-SiSP, 2-APSiSP and MWC. Our algorithms for these problems are more straightforward than our DSO algorithms, as we are able to adapt sequential algorithms leveraging work-optimal parallel algorithms for APSP~\cite{KarczmarzS21} and SSSP~\cite{CaoF23parallel} to obtain work-optimal bounds. 

% In addition to DSO, we consider PRAM algorithms for other graph problems studied in sequential $\tilde{O}(mn)$ fine-grained complexity class~\cite{AgarwalR18finegrained} that DSO (requiring answers to $n$ queries) belongs to. We present algorithms for . Our algorithms for these problems are more straightforward than our DSO algorithms, as we are able to adapt sequential algorithms leveraging work-optimal parallel algorithms for APSP~\cite{KarczmarzS21} and SSSP~\cite{CaoF23parallel} to obtain work-optimal bounds. 

We first state and present our PRAM algorithms for DSO in Sections~\ref{sec:pram:results}-\ref{sec:pram:algb}. Then, we describe and present our results for other problems (RPaths, 2-SiSP, 2-APSiSP, MWC) in Section~\ref{sec:pram:algother}.

\subsection{Preliminaries}
\label{sec:pram:prelim}

% \vignesh{Following paragraph is from thesis preliminaries}
\paragraph{The PRAM model}
\label{sec:prelim:pram}
In the Parallel RAM (PRAM) model~\cite{ShiloachV81}, computation is performed by a number of synchronous processors accessing a shared memory. We specifically use the work-span model to analyze the cost of computation, which avoids concerns such as number of processors and scheduling tasks to processors. The total number of operations performed over all processors is called the \textit{work}. The longest chain of dependencies between these operations is called the \textit{span}. There are different variants of the PRAM model - EREW, CREW, CRCW - depending on whether more than one processor is allowed to read (Exclusive/Concurrent Read) or write (Exclusive/Concurrent Write) to the same location in the shared memory. In our algorithms we use the CREW model, but we can readily translate between these models using known reductions with only polylogarithmic increase in cost. We use the notation $\tilde{O}$ to hide poly-logarithmic factors in our costs.

% \vignesh{Following paragraphs from SPAA proceedings}

Let $G=(V,E)$ be a directed weighted graph, let $|V|=n, |E|=m$. Each edge $(s,t) \in E$ (for $s,t \in V$) has a non-negative integer weight $w(s,t)$. We denote the shortest path distance from $s$ to $t$ by $d(s,t)$. We use $d(s,t,e)$, for $s,t \in V, e \in E$, to denote the shortest path distance from $s$ to $t$ in the graph $G-\{e\}$, i.e., the graph $G$ with edge $e$ removed, also known as the replacement path distance.

% We use the work-depth model to present our algorithms, similar to other PRAM results~\cite{KarczmarzS21,CaoF23parallel}. Our work and time bounds are presented in $\tilde{O}$ format  which hides polylog factors and are valid for EREW, CREW, CRCW models of PRAM due to known standard reductions between these models.

% \vignesh{Definition and context from thesis, modified for just PRAM}
We first define the DSO problem as follows. 

\begin{restatable}{definition}{defdso}
\textit{Distance Sensitivity Oracles (DSO) Problem}: 
Compute $d(x,y,e)$, the $x$-$y$ shortest path distance when $e$ is removed, for any pair of vertices $x,y \in V$ and any $e \in E$. A DSO algorithm is allowed to perform some preprocessing on the graph $G$ to construct on oracle, which then answers queries of the form $d(x,y,e)$. \label{def:dso}
\end{restatable}

In a PRAM algorithm for DSO, we assume that the query can be accessed by any processor. When the query algorithm terminates, the answer to the query is stored in the shared memory.
The parameters of interest in a PRAM algorithm for DSO are as follows:

\begin{enumerate}
    \item Preprocessing work and parallel time: The work and parallel time required during preprocessing to construct the DSO.
    \item Query work and parallel time: The work and parallel time to answer a distance query.
    \item Oracle Size: Total number of words stored for the oracle after preprocessing.
\end{enumerate}

In this paper, we primarily focus on DSOs with a query cost of $O(1)$ work on a single processor.

We will also use the notion of hop-limited DSO, defined as follows:

\begin{definition}
    Let $d_h(x,y,e)$, for any $1 \le h \le n$, denote the minimum weight path among all $x$-$y$ paths of at most $h$ edges not containing edge $e$. In an $h$-hop limited DSO , also called $h$-hop DSO, we need to preprocess an input graph $G=(V,E)$ to answer queries $d_h(x,y,e)$ for any $x,y \in V, e \in E$. Note that an $n$-hop DSO is simply a complete DSO.
    \label{def:pram:hdso}
\end{definition}

\subsection{Roadmap}
\label{sec:pram:roadmap}
This paper is organized as follows. We begin in Section~\ref{sec:pram:results} by stating our PRAM algorithm results for DSO.
We describe the sequential DSO framework of~\cite{BernsteinK09dso} in Section~\ref{sec:pram:frame}, which serves as a foundation for our constructions. In Section~\ref{sec:pram:alga}, we present \textsc{DSO-A}, our first PRAM algorithm, which is a work-efficient parallel implementation of this framework. Next, in Section~\ref{sec:pram:algb}, we develop our faster algorithms, \textsc{DSO-B} and \textsc{DSO-C}. Finally in Section~\ref{sec:pram:algother}, we present work-optimal PRAM algorithms for a suite of related graph problems: RPaths, 2-SiSP, 2-APSiSP, MWC and ANSC.

\subsection{Prior Work}

% \vignesh{Added prior work based on thesis.}
The DSO problem has been studied extensively in the sequential setting~(e.g. \cite{DemetrescuTCR08dso, BernsteinK08dso, BernsteinK09dso, WeimannY13, Ren22, DeyG24}). No non-trivial algorithms for DSO were known in the PRAM setting. The other graph problems studied in this paper (such as RPaths, MWC, and 2-APSiSP) also have not been previously investigated in the PRAM model.

PRAM algorithms for shortest path problems in graphs have been studied extensively: The first algorithms for SSSP tailored to PRAM were given in~\cite{KleinS97, Spencer97}, with the current best work-optimal algorithm for directed exact SSSP taking $\tilde{O}(m)$ work and $\tilde{O}(n^{1/2+o(1)})$ parallel time~\cite{CaoF23parallel}. Approximate SSSP has also been studied, with improved $\tilde{O}(1)$ parallel time for undirected graphs~\cite{RozhonGHZL22undir}.

APSP can be solved in the PRAM in $\tilde{O}(n^3)$ work and $\tilde{O}(1)$ parallel time~\cite{HanPR97}. For sparser graphs, a deterministic algorithm for directed weighted APSP was given in~\cite{KarczmarzS21} that takes $\tilde{O}(mn+(n/d)^3)$ work and $\tilde{O}(d)$ parallel time, for a parameter $d \in [1,n]$. This improves on a randomized parallel transitive closure trade-off of~\cite{UllmanY91}.

% \vignesh{All following sections are from thesis, which built on SPAA proceedings.}

\section{Summary of DSO Results}
\label{sec:pram:results}

We present PRAM algorithms for DSO in directed weighted graphs with integer weights, our results are tabulated in Table~\ref{tab:pram:results}. 
Integer weights are needed in our algorithms only for computing SSSP where we utilize the PRAM algorithms of~\cite{CaoF23parallel,KleinS97}.  Otherwise, our results hold for real weighted graphs. All of our preprocessing algorithms are randomized, and are correct w.h.p. in $n$. 
In all of our algorithms, we construct an $\tilde{O}(n^2)$-sized oracle that can answer any query in $O(1)$ time on a single processor. 

\begin{table}[t!]
    \caption[PRAM results for directed weighted DSO]{ PRAM results for directed weighted DSO. All our algorithms construct an $\tilde{O}(n^2)$-sized oracle that can answer any query in $O(1)$ time on a single processor.}
    \label{tab:pram:results}
    \begin{center}
        \begin{tblr}{hlines, vlines, columns={c}, column{1}={3.4cm}, column{2}={3cm}, column{3}={3cm}, column{4}={1.4cm}, column{5}={1.4cm}}
            \textbf{Result} & \textbf{Preprocessing work} & \textbf{Preprocessing span/time} & \textbf{Query work} & \textbf{Oracle size} \\ 
            {\textit{Algorithm DSO-A}} & {{$\tilde{O}(mn)$}}  & {{$\tilde{O}(n^{1/2+o(1)})$}} & {$O(1)$} & {$\tilde{O}(n^2)$} \\
            {\textit{Algorithm DSO-B}} & {$\tilde{O}(n^3)$}  & {$\tilde{O}(1)$} & {$O(1)$} & {$\tilde{O}(n^2)$} \\
            {\textit{Algorithm DSO-C} \\ (for any $1 \le h \le n$)} & {$\tilde{O}(mnh+\frac{n^3}{h})$}  & {$\tilde{O}(h)$} & {$O(1)$} & {$\tilde{O}(n^2)$} \\
        \end{tblr}
    \end{center} 
\end{table}

Our first algorithm \textsc{DSO-A} with bounds shown in Theorem~\ref{thm:pram:alga} implements the sequential DSO of~\cite{BernsteinK09dso} on the PRAM. This algorithm is work-efficient for a constant query cost DSO, matching the preprocessing work, oracle size, and query time of the sequential algorithm in~\cite{BernsteinK09dso}. This is described in Section~\ref{sec:pram:alga}.

\begin{restatable}{theorem}{thmpramalga}
We can construct a DSO for directed weighted graphs on the PRAM with preprocessing cost of $\tilde{O}(mn)$ work and $\tilde{O}(n^{1/2+o(1)})$ parallel time. The constructed oracle has size $\tilde{O}(n^2)$, and can answer any query in $O(1)$ work. 
    \label{thm:pram:alga}
\end{restatable}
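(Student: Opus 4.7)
The plan is to parallelize the sequential Bernstein–Karger DSO framework (recapped in Section~\ref{sec:pram:frame}) while preserving its $\tilde{O}(mn)$ sequential work bound. That framework reduces DSO construction to four ingredients: (i) APSP on $G$; (ii) random sampling of pivot sets at $O(\log n)$ geometric hop scales $h$; (iii) hop-limited shortest path tree computations rooted at each sampled pivot; and (iv) assembling the precomputed distances into an $\tilde{O}(n^2)$-sized lookup table whose entries are read by the query procedure in $O(1)$ time. Each of these ingredients admits a work-optimal PRAM implementation.

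First, I would compute APSP by launching $n$ independent SSSP calls in parallel, each using the work-optimal PRAM SSSP algorithm of~\cite{CaoF23parallel}. Since all $n$ calls execute concurrently, this costs $\tilde{O}(mn)$ total work and span $\tilde{O}(n^{1/2+o(1)})$, which already accounts for the integer-weight assumption in the theorem statement. Next, for each hop scale $h \in \{1,2,4,\ldots,n\}$ I sample a set $B_h$ of $\tilde{O}(n/h)$ pivots in polylog span, and then simultaneously run an $h$-hop-limited SSSP (in both in- and out-directions) from every $b \in B_h$ at every scale $h$. The work summed across scales is $\sum_h (n/h)\cdot \tilde{O}(mh) = \tilde{O}(mn)$, while the span remains dominated by a single parallel SSSP call, so this phase preserves our target bounds.

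Finally, I build the oracle table. For each pair $(x,y)$ of vertices, the BK09 query procedure reads a constant number of entries indexed by certain ``best'' pivots of $x$ and $y$; I precompute these pivots and their associated replacement distances in parallel across all $(x,y)$ pairs, using parallel min-reduction to choose the best pivot per pair from the $\tilde{O}(1)$ candidates at each scale. The writes are independent across pairs, so this assembly phase contributes $\tilde{O}(n^2)$ work and polylog span. Summing the four phases gives preprocessing work $\tilde{O}(mn)$ and span $\tilde{O}(n^{1/2+o(1)})$, with $\tilde{O}(n^2)$ oracle size and $O(1)$ query work on a single processor.

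The main obstacle is to verify that the BK09 pivot-intersection analysis --- which sequentially traces a shortest path to locate a pivot close to any failed edge --- continues to produce correct replacement distances once it is flattened into a static parallel lookup. I handle this by showing that each table entry depends only on outputs of the parallel APSP and hop-limited SSSP phases, so the sequential correctness argument applies unchanged entry-by-entry, while the parallel implementation never introduces sequential dependencies beyond those within a single SSSP invocation. The high-probability correctness is inherited from the pivot-sampling analysis of~\cite{BernsteinK09dso}, unaffected by the change in computational model.
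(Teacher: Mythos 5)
There is a genuine gap: your preprocessing never actually computes any replacement distances. Phases (i)--(iii) of your plan produce only ordinary shortest-path information --- APSP on $G$ and $h$-hop-limited SSSP trees rooted at sampled pivots --- but the entries the Bernstein--Karger query reads are of the form $D_i[c,y,e]$, i.e.\ distances $d(c,y,e)$ in the graph with the failed edge $e$ \emph{removed}, for every edge $e$ covered by the center $c$. A min-reduction over pivots of quantities like $d(x,b)+d(b,y)$ cannot recover these, since nothing in your pipeline ever avoids a failed edge. The work-efficient way to get them (and the heart of the $\tilde{O}(mn)$ bound) is the excluded-shortest-path technique of~\cite{DemetrescuTCR08dso}: for a center $x$ of priority $k$, the covered edges lie within $\tilde{O}(2^k)$ depth of its shortest path tree, the edges at a fixed depth form an independent set, and each independent set can be excluded with a single SSSP computation; summing $\tilde{O}(2^k)$ excludes over the $\tilde{O}(n/2^k)$ centers of each priority gives $\tilde{O}(n)$ SSSP calls in total, which is exactly what the paper parallelizes via~\cite{CaoF23parallel}. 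Without this ingredient your claimed $\sum_h (n/h)\cdot\tilde{O}(mh)=\tilde{O}(mn)$ accounting measures the wrong computation.

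A second omission: the query formula has a third branch, $DBV[x,y,i]$, covering the case where the failed edge lies strictly inside the interval between the two centers, and this requires (a) locating the bottleneck edge $BV[x,y,i]$ of each interval via binary search over the $D_i$ values with an RMQ structure (parallelized with~\cite{BerkmanV93}), and (b) computing the distances $DBV[x,y,i]$ when that bottleneck is removed, which~\cite{BernsteinK09dso} obtains through a special auxiliary graph; the paper splits this graph into $n$ per-source graphs $G_x$ and runs $\tilde{O}(n)$ more SSSP computations on them. Your ``assembly phase'' of independent table writes with polylog span does not account for either step, and these are precisely the parts of the framework that do not reduce to plain (hop-limited) SSSP from pivots. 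So while the top-level claim and the use of~\cite{CaoF23parallel} to get $\tilde{O}(mn)$ work and $\tilde{O}(n^{1/2+o(1)})$ span are in the right spirit, the decomposition you propose would not yield a correct oracle as stated.
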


Our next two algorithms use the notion of hop-limited DSOs, see Definition~\ref{def:pram:hdso}. We use a PRAM implementation of a sequential technique from~\cite{Ren22} to construct a $\frac{3}{2}h$-hop limited DSO given a $h$-hop limited DSO ($1 \le h \le n$) at the cost of increased preprocessing.
This extension procedure, described in Section~\ref{sec:pram:extend}, consists of two steps: constructing a $\frac{3}{2}h$-hop limited DSO with high query time using vertex sampling (in Section~\ref{sec:pram:highquery}), and then reducing the query time to $O(1)$ using the framework of~\cite{BernsteinK09dso} and the PRAM APSP algorithm of~\cite{KarczmarzS21} (in Section~\ref{sec:pram:lowquery}).

We present a fast algorithm \textsc{DSO-B} with bounds shown in Theorem~\ref{thm:pram:algb} that is work efficient for dense graphs ($m=\Theta(n^2)$). The work also matches the current best work of PRAM APSP in $\tilde{O}(1)$ parallel time. This result uses a straightforward 2-hop DSO construction as a base case together with repeated application of the parallel DSO extension procedure. It is described in Section~\ref{sec:pram:algb:dso1}. 

\begin{restatable}{theorem}{thmpramalgb}
We can construct a DSO for directed weighted graphs on the PRAM with preprocessing cost of $\tilde{O}(n^3)$ work and $\tilde{O}(1)$ parallel time. The constructed oracle has size $\tilde{O}(n^2)$, and can answer any query in $O(1)$ work. 
    \label{thm:pram:algb}
\end{restatable}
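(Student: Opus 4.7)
The plan is to obtain Algorithm \textsc{DSO-B} by combining a straightforward $2$-hop DSO as the base case with repeated application of the parallel extension procedure of Section~\ref{sec:pram:extend}. Since that procedure transforms an $h$-hop DSO into a $\tfrac{3}{2}h$-hop DSO, starting at $h=2$ and iterating $k = O(\log_{3/2} n)$ times yields an $h \ge n$ hop-limited DSO, which by Definition~\ref{def:pram:hdso} is a complete DSO.

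For the base case I would build a $2$-hop DSO directly. For every ordered pair $(x,y) \in V \times V$, compute in parallel over all intermediate vertices $v$ the weight of the two-edge walk $x \to v \to y$. Since any $2$-hop path involves at most two edges, a query $(x,y,e)$ can be answered in $O(1)$ work by precomputing, for each pair $(x,y)$, a constant-sized summary (the best $2$-hop path, together with the best $2$-hop path avoiding each of the first-edge and second-edge incidences on $e$); there are only $O(1)$ relevant cases since $e$ must be one of the at most two edges of the surviving candidate. This costs $\tilde{O}(n^3)$ work and $\tilde{O}(1)$ span using parallel prefix/min over the $n$ candidates per pair.

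For the inductive step I would invoke the two-phase extension of Section~\ref{sec:pram:extend}: the vertex-sampling construction of a high-query-time $\tfrac{3}{2}h$-hop DSO from Section~\ref{sec:pram:highquery}, followed by the query-time reduction of Section~\ref{sec:pram:lowquery} that restores $O(1)$ query via the framework of \cite{BernsteinK09dso} and the PRAM APSP of \cite{KarczmarzS21}. Setting the APSP trade-off parameter $d=1$ in \cite{KarczmarzS21} yields $\tilde{O}(mn + (n/d)^3) = \tilde{O}(n^3)$ work and $\tilde{O}(1)$ span. Assuming (as the extension procedure is designed to guarantee) that each application also costs $\tilde{O}(n^3)$ work and $\tilde{O}(1)$ span, summing over $k = O(\log n)$ iterations remains $\tilde{O}(n^3)$ work and $\tilde{O}(1)$ span after absorbing polylog factors into the $\tilde{O}$ notation. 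The final iteration, which invokes the framework of \cite{BernsteinK09dso}, delivers the claimed $\tilde{O}(n^2)$ oracle size and $O(1)$ query work.

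The main obstacle will be verifying that a single application of the extension procedure indeed fits within $\tilde{O}(n^3)$ work and $\tilde{O}(1)$ span irrespective of $h$; this requires that the sampling phase (which uses the input $h$-hop DSO on pairs of $\tilde{O}(n/h)$ sampled vertices plus bounded-hop SSSP-style explorations of radius $O(h)$) and the low-query-time phase (which feeds the intermediate oracle into the APSP-based framework) both parallelize cleanly. Correctness of each inductive step reduces to the standard sampling argument from \cite{Ren22}: every $\tfrac{3}{2}h$-hop shortest replacement path contains, w.h.p., a sampled vertex within $h$ hops of either endpoint, so that concatenating $h$-hop queries through such a pivot recovers the correct distance. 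Correctness of the whole construction then follows by induction on $k$, giving the theorem.
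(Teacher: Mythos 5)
Your proposal is correct and follows essentially the same route as the paper: a $2$-hop base-case DSO followed by $O(\log n)$ applications of the extension procedure of Section~\ref{sec:pram:extend} (Lemmas~\ref{lem:pram:highquery} and~\ref{lem:pram:lowquery}) with $P_t=\tilde{O}(1)$, so that each step costs $\tilde{O}(mn+n^3/h+(n/P_t)^3)=\tilde{O}(n^3)$ work and $\tilde{O}(1)$ span, keeping only the final $n$-hop oracle of size $\tilde{O}(n^2)$ with $O(1)$-work queries. The only (harmless) deviation is your base case: you build the $O(1)$-query $2$-hop oracle directly via per-pair top-candidate summaries in $\tilde{O}(n^3)$ work, whereas the paper starts from a preprocessing-free $2$-hop DSO with $O(n)$-work queries and invokes Lemma~\ref{lem:pram:lowquery} with $q=n$ to reduce the query cost, yielding the same bounds; also, the ``assumption'' you flag about each extension step fitting in $\tilde{O}(n^3)$ work and $\tilde{O}(1)$ span is exactly what Lemma~\ref{lem:pram:extend} guarantees.
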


Theorem \ref{thm:pram:algc} gives a tradeoff between work and parallel time in algorithm \textsc{DSO-C}. 
In Theorem~\ref{thm:pram:algc}, we can choose $h = \tilde{o}\left( \min \left( \frac{n^2}{m}, \sqrt{n} \right) \right)$ to obtain an algorithm that achieves sub-$\sqrt{n}$ parallel time with sub-$n^3$ work as long as $m = \tilde{o}(n^2)$, i.e., the graph is not fully dense. Note that this tradeoff is only meaningful for $h \le \tilde{O}(n^{1/2+o(1)})$, as Theorem~\ref{thm:pram:alga} provides a superior bound for larger $h$.
% Theorem~\ref{thm:pram:algc} is only meaningful for $h \le \tilde{O}(n^{1/2+o(1)})$ due to Theorem~\ref{thm:pram:alga}. 

To achieve Theorem \ref{thm:pram:algc}, we use a hop limited approach similar to Theorem~\ref{thm:pram:algb}, but we use a more sophisticated base case construction that directly constructs a $h$-hop limited DSO. This is done with a graph sampling technique, using ideas from a sequential DSO algorithm~\cite{WeimannY13}, and a parallel hop-limited shortest paths algorithm~\cite{CaoF23parallel}. After the base case construction, we repeatedly apply the parallel DSO extension procedure to obtain the result, proven in Section~\ref{sec:pram:algb:dso2}.

\begin{restatable}{theorem}{thmpramalgc}
We can construct a DSO for directed weighted graphs on the PRAM with preprocessing cost of $\tilde{O}(mnh + (n^3/h))$ work and $\tilde{O}(h)$ parallel time, for any $1 \le h \le n$. The constructed oracle has size $\tilde{O}(n^2)$, and can answer any query in $O(1)$ work. 
    \label{thm:pram:algc}
\end{restatable}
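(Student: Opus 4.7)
The plan is to follow the same outer strategy as Theorem~\ref{thm:pram:algb}, namely, build an initial hop-limited DSO and then repeatedly inflate the hop budget by a factor of $3/2$ using the parallel DSO extension procedure of Section~\ref{sec:pram:extend}, but replace the easy $2$-hop base case used there with a more substantial $h$-hop base case. After $O(\log(n/h))$ extensions we reach an $n$-hop DSO, which is a full DSO. The added $h$-hop base case is the only source of the $\tilde{O}(h)$ span and of the $\tilde{O}(mnh)$ work term, beyond the $\tilde{O}(n^3/h)$ contribution from the extension phase.

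\emph{Base case: an $h$-hop DSO.} Following the sampling technique of~\cite{WeimannY13}, I would sample $\tilde{O}(n/h)$ ``pivot'' vertex subsets of appropriate size so that, with high probability, for every query triple $(x,y,e)$ whose $h$-hop replacement path exists, at least one sampled pivot lies on that path. For each pivot I run the parallel hop-limited SSSP algorithm of~\cite{CaoF23parallel} on the corresponding graph variants, using $\tilde{O}(m)$ work and $\tilde{O}(h)$ span per call. Summing over the sampled instances and over the $n$ query sources gives $\tilde{O}(mnh)$ work in $\tilde{O}(h)$ span, after which a simple pointwise min over $\tilde{O}(1)$ detour candidates per triple yields the $h$-hop DSO table. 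A final pass through the~\cite{BernsteinK09dso} query structure (applied with the hop bound $h$) produces an $\tilde{O}(n^2)$-size oracle that answers $h$-hop queries in $O(1)$ work.

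\emph{Iterated extension.} Starting from this $h$-hop DSO, I apply the parallel extension procedure (Sections~\ref{sec:pram:highquery}--\ref{sec:pram:lowquery}) for $O(\log_{3/2}(n/h))$ rounds, raising the hop bound from $h$ through $(3/2)h, (3/2)^2 h, \ldots$ up to $n$. By the analysis already supporting Theorem~\ref{thm:pram:algb}, the step that lifts an $h_i$-hop DSO to a $(3/2)h_i$-hop DSO costs $\tilde{O}(n^3/h_i)$ work and $\tilde{O}(1)$ span; here $h_i$ grows geometrically, so the total extension work is dominated by the first term and equals $\tilde{O}(n^3/h)$, while the total extension span is $\tilde{O}(1)\cdot \log(n/h) = \tilde{O}(1)$. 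Adding the base case gives $\tilde{O}(mnh + n^3/h)$ work and $\tilde{O}(h)$ span in total. The final oracle inherits size $\tilde{O}(n^2)$ and $O(1)$ query work from the~\cite{BernsteinK09dso} structure used at the last extension.

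\emph{Main obstacle.} The nontrivial piece is the parallel base case. Sequentially~\cite{WeimannY13}, the sampled SSSPs can be scheduled flexibly, whereas on the PRAM one must (i) execute all $\tilde{O}(n/h)$ sampled hop-limited SSSP tasks simultaneously while holding the span at $\tilde{O}(h)$ and the work at $\tilde{O}(mnh)$, and (ii) combine their outputs into the $h$-hop DSO table in polylog span without inflating the work. Making the sampling certify every query triple with high probability, and arranging the combine step as a min over only $\tilde{O}(1)$ candidates per triple, is the delicate bookkeeping. Once the base case is in place, the iterated extension is a clean reuse of the machinery developed for Theorem~\ref{thm:pram:algb}, and the stated bounds follow.
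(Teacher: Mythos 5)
Your outer scheme---an $h$-hop base case followed by $O(\log(n/h))$ applications of the extension procedure---is exactly the paper's, and your accounting of the extension phase ($\tilde{O}(n^3/h_i)$ work per step with $h_i$ growing geometrically, $\tilde{O}(1)$ added span per step, total $\tilde{O}(n^3/h)$ work on top of the base case; the omitted $\tilde{O}(mn)$ and $\tilde{O}((n/P_t)^3)=\tilde{O}(n^3/h^3)$ terms of Lemma~\ref{lem:pram:extend} are subsumed) is essentially correct. The genuine gap is the base case, which you yourself flag as the crux but do not actually construct. Sampling $\tilde{O}(n/h)$ vertex pivots that hit every $h$-hop replacement path does not by itself give an $h$-hop DSO: to stitch a distance through a pivot $s$ you need $d_h(x,s,e)$ and $d_h(s,y,e)$, i.e.\ pivot-to-endpoint distances that themselves avoid the failed edge, which is precisely the object being built. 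The hitting-set idea is what powers Lemma~\ref{lem:pram:highquery}, and it works there only because a smaller-hop DSO is already available to answer those subqueries; at the base level there is nothing to bootstrap from, and running hop-limited SSSP ``from each pivot'' in unspecified graph variants does not break the circularity. The work count also does not come out: $\tilde{O}(n/h)$ pivot-rooted SSSPs give only $\tilde{O}(mn/h)$ work but cannot certify all $\Theta(n^2 m)$ query triples, while pairing the pivots with all $n$ sources gives $\tilde{O}(n^2/h)$ calls, i.e.\ $\tilde{O}(mn^2/h)$ work, exceeding the claimed $\tilde{O}(mnh)$ exactly in the interesting regime $h\le\sqrt{n}$; and the assertion that each triple has only $\tilde{O}(1)$ candidates to minimize over is unsupported in your setup.

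What makes the base case work in the paper is edge-sampling of subgraphs rather than vertex pivots (Lemma~\ref{lem:sampgraph}, adapted from \cite{WeimannY13} via \cite{ManoharanR25dsoarxiv}): sample $\tilde{O}(h)$ subgraphs $G_i$, keeping each edge independently with probability $1-\frac{1}{h}$. W.h.p.\ every $h$-hop replacement path avoiding $e$ survives intact in some $G_i$ that omits $e$, and each edge $e$ is missing from only $O(\log n)$ of the $G_i$. One then computes $h$-hop APSP in every $G_i$ by running the hop-limited SSSP of \cite{CaoF23parallel} from all $n$ sources, each call $\tilde{O}(m)$ work and $\tilde{O}(h)$ span, for $\tilde{O}(mnh)$ total work and $\tilde{O}(h)$ span with all calls in parallel; a query $d_h(x,y,e)$ is then a minimum over the $O(\log n)$ stored values $d_i(x,y)$ with $e\notin G_i$, which is where the $\tilde{O}(1)$ query work and the clean combine step actually come from. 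No separate pass through the \cite{BernsteinK09dso} structure is needed at the base level: $\tilde{O}(1)$ query work suffices for Lemma~\ref{lem:pram:extend}, whose second step (Lemma~\ref{lem:pram:lowquery}) already performs that reduction. With this base case in place, your iterated-extension argument goes through and yields the stated $\tilde{O}(mnh+n^3/h)$ work, $\tilde{O}(h)$ time, $\tilde{O}(n^2)$ size, and $O(1)$-work queries.
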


\section{Framework for Distance Sensitivity Oracle}
\label{sec:pram:frame}

Our PRAM algorithms for DSO use techniques from sequential DSO algorithms~\cite{DemetrescuTCR08dso,BernsteinK09dso,Ren22,WeimannY13}. In this section, we describe a framework for DSO construction due to Bernstein and Karger~\cite{BernsteinK09dso}. In Sections~\ref{sec:pram:alga} and~\ref{sec:pram:algb}, we will use different implementations of this framework along with other techniques in our PRAM algorithms. We present our results for edge removal (this can be modified for vertex removal as in~\cite{BernsteinK09dso}).
Recall that the sequential DSO construction of~\cite{BernsteinK09dso} has $\tilde{O}(mn)$ preprocessing time, $\tilde{O}(n^2)$ space requirement, and $O(1)$ query time. 

% \vignesh{Removed some comparisons to DSO algorithm.}
% Note that we used elements of a different sequential DSO algorithm from~\cite{BernsteinK08dso} in one of our \congest DSO algorithms (Algorithm~\ref{alg:dso:preproc} in Section~\ref{sec:dso:algpre}): This sequential algorithm is slower than the result in~\cite{BernsteinK09dso}, but is more amenable to the distributed setting. Due to the difference in challenges in the PRAM model, we are able to implement~\cite{BernsteinK09dso}.

The DSO framework of~\cite{BernsteinK09dso} is described in Algorithm~\ref{alg:pram:frame}. We described the essential aspects of this algorithm now.

\begin{algorithm}[t!]
    \caption{Framework from~\cite{BernsteinK09dso} for DSO with $O(1)$ query.}
    \begin{algorithmic}[1]
        \Require Directed weighted graph $G=(V,E)$.
        \State \textbf{Preprocessing Algorithm} :
        \State Every vertex is randomly assigned a priority $k$ with probability $\Theta(\frac{1}{2^k})$, for $k=1,2,\dots \log n$. \label{alg:pram:frame:samp}
        \State For each $x,y \in V$, $i \le \log n$, compute $CR[x,y,i], CL[x,y,i], BCP[x,y]$. \label{alg:pram:frame:apsp}
        \State For each $x \in V$ and edge $e$ covered by $x$, for each $y \in V$, compute $D_i[x,y,e]$ ($i = $ priority of $x$). \label{alg:pram:frame:exclude}
        \State For each $x,y \in V$, $i \le \log n$, compute bottleneck vertex $BV[x,y,i]$ using binary search and an RMQ data structure on the computed $D_i[x,y,e]$ values. \label{alg:pram:frame:binary}
        \State For each $x,y \in V$, $i \le \log n$, compute distance $DBV[x,y,i]$ when vertex $BV[x,y,i]$ is removed. \label{alg:pram:frame:dbv}
        \State \textbf{Query Algorithm}, given query $d(x,y,e)$ :
        \State Let edge $e=(u,v)$. Let $i=BCP[x,u]$, $j=BCP[v,y]$.
        \State Compute $c_x = CL[x,y,i]$ and $c_y=CR[x,y,j]$. \rightComment{Edge $e$ is covered by $c_x$ and $c_y$}
        \State Output $d(x,y,e) = \min\{ d(x,c_x)+D_i[c_x,y,e], D_i[x,c_y,v]+d(c_y,y), DBV[x,y,i]\}$
    \end{algorithmic}
    \label{alg:pram:frame}
\end{algorithm}

Each vertex in the input graph $G=(V,E)$ is assigned a \textit{priority} $k$, where $k \in \{1,2, \dots \log n\}$, and a vertex with priority $k$ is called a $k$-\textit{center}. In each $x$-$y$ shortest path, for $x,y \in V$, the algorithm identifies a sequence of centers of strictly increasing priority up to the maximum, followed by strictly decreasing priority to $y$. A center $x$ \textit{covers} an edge $e$, if $e$ is on its outgoing (or incoming) shortest path tree, and there is no center of higher priority on the shortest path from $x$ to $e$. Thus, the sequence of centers partitions the path into intervals that are covered by their endpoint centers. By randomly sampling $k$-centers with probability $\Theta(1/2^k)$, a center of priority $k$ only has to cover edges within $\tilde{O}(2^k)$ hops from it.

The algorithm precomputes distances $d(x,y,e)$ for all $y \in V$ and edges $e \in E$ that are covered by center $x$. Furthermore, the algorithm uses the notion of a \textit{bottleneck edge} for an interval, which is the edge in the interval that maximizes the replacement path distance when it is removed. The following data is computed during preprocessing, where $x,y \in V, e \in E$ and $i \in \{1,2, \dots 2\log n\}$ :

\begin{itemize}
    \item $CR[x,y,i]$: first center of priority-$\ge i$ on $x$-$y$ shortest path.
    \item $CL[x,y,i]$: first center of priority-$\ge i$ on reversed $y$-$x$ shortest path (edge directions flipped). 
    \item $BCP[x,y]$: biggest center priority on $x$-$y$ shortest path.
    \item $D_i[x,y,e]$: distance $d(x,y,e)$, where $x$ is a center of priority $i$ that covers $e$.
    \item $BV[x,y,i]$: bottleneck edge of interval $i$ on $x$-$y$ shortest path.
    \item $DBV[x,y,i]$: distance $d(x,y,BV[x,y,i])$, for each interval $i$.
\end{itemize}

In the PRAM model, we can store these values in the shared memory and utilize them to answer queries. The preprocessed data takes up a total of $\tilde{O}(n^2)$ space. After preprocessing, a query can be answered by looking up $O(1)$ values, as shown in Algorithm~\ref{alg:pram:frame}. Note the distinction from the distributed DSO algorithms in~\cite{ManoharanR25dso} had to perform additional work, as these values may be computed at different nodes and then need to be communicated.

In the next two sections, we present PRAM algorithms that use this framework in different ways. All our algorithms finally compute the same preprocessed data as~\cite{BernsteinK09dso}, so the oracle can be stored in $\tilde{O}(n^2)$ space and can answer a query in $O(1)$ work on a single processor. 
In Section~\ref{sec:pram:alga} we show how to directly implement Algorithm~\ref{alg:pram:frame} efficiently on the PRAM, to obtain a work-efficient preprocessing algorithm that runs in $\tilde{O}(n^{1/2+o(1)})$ parallel time. In Section~\ref{sec:pram:algb}, we use this framework in addition to hop-limited DSO construction techniques to obtain faster preprocessing algorithms.

\section{Work-Efficient PRAM DSO}
\label{sec:pram:alga}

In this section, we prove Theorem~\ref{thm:pram:alga} by presenting \textsc{DSO-A} in Algorithm~\ref{alg:pram:alga}. \textsc{DSO-A} implements the framework of Algorithm~\ref{alg:pram:frame} in PRAM, and preprocesses the input graph in $\tilde{O}(mn)$ work and $\tilde{O}(n^{1/2+o(1)})$ parallel time. The constructed oracle has $\tilde{O}(n^2)$ size and can  answer any query in $O(1)$ work. This algorithm uses excluded shortest paths computations, where given a graph $G=(V,E)$, set of edges $P \subseteq E$ and a source $x \in V$, we need to compute distances $d(x,y,e)$ for all $y\in V, e \in P$. If the set $P$ is \textit{independent}, i.e., the edges are in the outgoing shortest path tree of $x$ and the subtrees rooted at any pair of edges in $P$ are disjoint, then we can compute all excluded shortest paths distances using one SSSP computation~\cite{DemetrescuTCR08dso}.

\begin{algorithm}[t!]
    \caption{\textsc{DSO-A} implements Algorithm~\ref{alg:pram:frame} in PRAM with preprocessing cost $\tilde{O}(mn)$ work and $\tilde{O}(n^{1/2+o(1)})$ parallel time, $O(1)$ work query and $O(n^2)$ oracle size.}
    \begin{algorithmic}[1]
        \Require Directed weighted graph $G=(V,E)$.
        \State \textbf{Preprocessing Algorithm} :
        \State Vertex $v$ is randomly assigned a \textit{priority} $p_v$ from $\{1,2, \dots \lceil \log n \rceil\}$, with a probability distribution choosing priority $j$ with probability $c/2^j$, for appropriate constant $c$. \label{alg:pram:alga:samp}
        \For{vertex $x \in V$}
        \State Compute SSSP from $x$ in $G$. Along with distances $d(x,y)$ for $y \in V$, also compute $CR[x,y,i], CL[x,y,i], BCP[x,y]$ for $y \in V, 1 \le i \le \log n$. \label{alg:pram:alga:apsp}
        \State Compute excluded shortest paths distances from $x$ for each edge $e$ that covers, i.e., $D_i[x,y,e]$ values. The set of edges at same depth in the shortest path tree $T_x$ can be excluded at the same time, as they are {independent}. \label{alg:pram:alga:exclude} \rightComment{W.h.p. in $n$, $\tilde{O}(2^{p_x})$ exclude computations are performed for source $x$. }
        \EndFor
        \For{vertices $x,y \in V$, priority $1 \le i \le \log n$}
        \State Compute bottleneck vertex $BV[x,y,i]$ using binary search and a parallel RMQ data structure~\cite{BerkmanV93} on the computed $D_i[x,y,e]$ values. \label{alg:pram:alga:binary}
        % \State Compute distance $DBV[x,y,i]$ when vertex $BV[x,y,i]$ is removed. \label{alg:pram:frame:dbv}
        \EndFor
        \For{vertex $x \in V$}
        \State Construct graph $G_x$ with vertices $(x,y,i), \forall y\in V,i \le \log n$ and a source vertex $s$, and $\tilde{O}(m+n)$ edges, following the construction in~\cite{BernsteinK09dso}. Compute SSSP on this graph to compute $DBV[x,y,i]$. \label{alg:pram:alga:dbv}
        \EndFor
        % \State For each $x,y \in V$, $i \le \log n$, compute $CR[x,y,i], CL[x,y,i], BCP[x,y]$, . \label{alg:pram:frame:apsp}
        % \State For each $x \in V$ and edge $e$ covered by $x$, for each $y \in V$, compute $D_i[x,y,e]$ ($i = $ priority of $x$). \label{alg:pram:frame:exclude}

        \State \textbf{Query Algorithm}, given query $d(x,y,e)$ :
        \State Let edge $e=(u,v)$. Let $i=BCP[x,u]$, $j=BCP[v,y]$.
        \State Compute $c_x = CL[x,y,i]$ and $c_y=CR[x,y,j]$.
        \State Output $d(x,y,e) = \min\{ d(x,c_x)+D_i[c_x,y,e], D_i[x,c_y,v]+d(c_y,y), DBV[x,y,i]\}$ \rightComment{$D_i$ values are computed in line~\ref{alg:pram:alga:exclude} and $DBV$ is computed in line~\ref{alg:pram:alga:dbv}}
    \end{algorithmic}
    \label{alg:pram:alga}
\end{algorithm}

\thmpramalga*

\begin{proof}[Proof of Theorem~\ref{thm:pram:alga}]
    We show how Algorithm~\ref{alg:pram:alga} implements Algorithm~\ref{alg:pram:frame} on the PRAM. 
    
    In line~\ref{alg:pram:alga:apsp} of Algorithm~\ref{alg:pram:alga}, we compute the center information $CR[x,y,i]$ and $BCP[x,y]$. Along with SSSP, with an additional $O(\log n)$ factor cost, we keep track of the closest vertex on each shortest path with given priority $i$. We repeat this on the reversed graph for $CL[x,y,i]$. 

    In line~\ref{alg:pram:alga:exclude}, note that each center $x$ of priority $k$ covers edges within $\tilde{O}(2^k)$ depth in its outgoing shortest path tree w.h.p. in $n$. The set of edges at a given depth $j$ in the outgoing shortest path tree of $x$ is independent, so we use one excluded shortest paths computation for these edges. For depths $1 \le j \le \tilde{O}(2^k)$, we use $\tilde{O}(2^k)$ exclude computations, and each exclude involves one SSSP computation. Since we have $\tilde{O}(n/2^k)$ vertices of priority $k$, and $O(\log n)$ priorities, we get a total of $\tilde{O}(n)$ SSSP computations over all $x \in V$ w.h.p. in $n$. Finding bottleneck vertices in line~\ref{alg:pram:alga:binary} uses a range minimum query (RMQ) data structure, which is implemented on the PRAM with linear work and $O(\log^* n)$ parallel time for preprocessing and $O(1)$ work per query~\cite{BerkmanV93}. 
    
    In line~\ref{alg:pram:alga:dbv}, we perform $\tilde{O}(n)$ SSSP computations. The algorithm of~\cite{BernsteinK09dso} constructs a graph with vertices $(x,y,i)$ for each interval $i$ on $x$-$y$ shortest path for $x,y \in V, i \le \log n$, and an additional vertex $s$. The edges are between $(x,y,i)$ to $(x,y',j)$ with weight $w(y',y)$ for each edge $(y',y) \in E$, and additional edges from $s$ to $(x,y,i)$. Then, they directly compute SSSP on this graph. We instead construct $n$ graphs $G_x$, one for each $x \in V$, with vertex $s$ and vertices $(x,y,i), \forall y\in V,i \le \log n$. Since there are no edges between different $G_x, G_y$, we can compute SSSP separately in each $G_x$, which has $\tilde{O}(n)$ vertices and $\tilde{O}(m+n)$ edges. This computes $DBV[x,y,i]$ as the shortest distance from $s$ to vertex $(x,y,i)$ in $G_x$.

    In total, we perform $\tilde{O}(n)$ SSSP computations, which takes a total of $\tilde{O}(mn)$ work and $\tilde{O}(n^{1/2+o(1)})$ time using the PRAM algorithm of~\cite{CaoF23parallel}. Line~\ref{alg:pram:alga:binary} takes $\tilde{O}(n^2)$ work and $\tilde{O}(1)$ time, and other lines take $O(n)$ work.
    The correctness of our algorithm readily follows from the correctness of~\cite{BernsteinK09dso}, and the oracle has the same $\tilde{O}(n^2)$ size as the construction of~\cite{BernsteinK09dso}. The query algorithm is the same as in Algorithm~\ref{alg:pram:frame}, which takes $O(1)$ work per query.
\end{proof}

\section{Faster PRAM DSO for Dense Graphs}
\label{sec:pram:algb}

In this section, we present PRAM algorithms that can beat the $\tilde{O}(n^{1/2+o(1)})$ parallel time of the preprocessing algorithm presented in Section~\ref{sec:pram:alga}. 
Our method constructs a series of hop limited DSOs (see Definition~\ref{def:pram:hdso}) in order to ultimately obtain a complete DSO.  We first describe the hop limited DSO constructions, and then show how we use it in our algorithms.

\subsection{Hop-limited DSO}
\label{sec:pram:extend}

% Let $d_h(x,y,e)$, for any $1 \le h \le n$, denote the minimum weight path among all $x$-$y$ paths of at most $h$ edges not containing edge $e$. In an $h$-hop limited DSO we needed to preprocess an input graph $G=(V,E)$ to answer queries $d_h(x,y,e)$ for any $x,y \in V, e \in E$. Note that an $n$-DSO is simply a complete DSO.
In this section, we describe a two-step DSO extension procedure to construct $\frac{3}{2}h$-hop limited DSO from an $h$-hop limited DSO, adapting a sequential technique of Ren~\cite{Ren22}.

\begin{lemma}
    Let $1 \le h \le n$.
    Given an $h$-hop DSO with preprocessing cost $P_w$ work and $P_t$ parallel time ($P_t \le n$) which answers a query in $\tilde{O}(1)$ work, we can construct a $\frac{3}{2}h$-hop DSO with preprocessing cost $P_w+\tilde{O}(mn+(n^3/h)+(n/P_t)^3)$ work and $P_t + \tilde{O}(1)$ parallel time, which can answer any query in $O(1)$ work.
    \label{lem:pram:extend}
\end{lemma}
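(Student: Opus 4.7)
The plan is to implement the two-step extension strategy outlined in the paper, adapting the sequential hop-extension idea of Ren~\cite{Ren22} to the PRAM. In Step A, I would build an intermediate $\frac{3}{2}h$-hop DSO with large query work via vertex sampling, using only the given $h$-hop DSO as a black box. In Step B, I would populate the Bernstein--Karger framework (Algorithm~\ref{alg:pram:frame}) using queries to the Step-A oracle together with parallel APSP, bringing the query work down to $O(1)$.

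For Step A, I would independently sample each vertex into a set $B \subseteq V$ with probability $\Theta(\log n / h)$, giving $|B| = \tilde{O}(n/h)$ and ensuring that any window of $h/2$ consecutive vertices on a path contains a sample w.h.p. in $n$. Then for any $\frac{3}{2}h$-hop replacement path from $x$ to $y$ avoiding $e$, either the path has at most $h$ edges (answered directly by the $h$-hop DSO) or it contains some sampled $b \in B$ at a position that splits it into two subpaths of at most $h$ edges each, yielding
\[
d_{3h/2}(x,y,e) \;=\; \min\!\Bigl( d_h(x,y,e),\ \min_{b \in B}\bigl(d_h(x,b,e) + d_h(b,y,e)\bigr) \Bigr).
\]
A Step-A query can be answered in $\tilde{O}(n/h)$ work and $\tilde{O}(1)$ parallel time by issuing all $|B|$ $h$-hop queries in parallel and combining by a parallel min-reduction. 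Step A contributes no preprocessing cost beyond choosing $B$.

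For Step B, I would follow the structure of Algorithm~\ref{alg:pram:alga} but with a Step-A query in place of each excluded SSSP. After assigning priorities and computing APSP via the Karczmarz--Sankowski algorithm~\cite{KarczmarzS21} in $\tilde{O}(mn + (n/P_t)^3)$ work and $\tilde{O}(P_t)$ parallel time (scheduled to overlap with the $h$-hop DSO preprocessing), the tables $CR, CL, BCP$ are filled in $\tilde{O}(n^2)$ additional work. The key step is computing $D_i[x,y,e]$ by a single Step-A query for each relevant triple: the Bernstein--Karger sampling analysis bounds the total number of such $(x,e,y)$ triples by $\tilde{O}(n^2)$ (summing $\tilde{O}(n/2^i) \cdot \tilde{O}(2^i) \cdot n$ over the $O(\log n)$ priorities), so the whole $D_i$ table costs $\tilde{O}(n^3/h)$ work and, since all queries are independent, only $\tilde{O}(1)$ parallel time. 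Bottleneck-vertex lookups $BV[x,y,i]$ follow from parallel RMQ on the $D_i$ arrays~\cite{BerkmanV93} in $\tilde{O}(1)$ span, and $DBV[x,y,i]$ is obtained exactly as in \textsc{DSO-A} from parallel SSSPs on the $n$ auxiliary graphs $G_x$, contributing $\tilde{O}(mn)$ work scheduled inside the $P_t$ time window.

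The main obstacle is maintaining the $P_t + \tilde{O}(1)$ parallel time bound while keeping work near optimal: the APSP and the auxiliary-graph SSSPs are the only steps whose natural span exceeds polylog, so they must be dispatched to run concurrently with the $h$-hop DSO preprocessing to have their spans absorbed into $P_t$, while Step A and the $D_i, BV$ lookups must all be realized in $\tilde{O}(1)$ span via parallel min-reductions and RMQ. The remaining correctness obligation is a union bound for Step A: each of the at most $n^2 m$ possible queries fails only if some $h/2$-window along its (fixed) optimal replacement path avoids $B$, an event of probability $n^{-\Omega(1)}$ under the $\Theta(\log n / h)$ sampling rate, so w.h.p.\ every query is answered correctly by the extended oracle.
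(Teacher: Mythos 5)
Your Step A matches the paper's Lemma~\ref{lem:pram:highquery} (and your explicit inclusion of the $d_h(x,y,e)$ term is a sensible safeguard for short replacement paths), and most of your Step B tracks the paper's Lemma~\ref{lem:pram:lowquery}: $CR,CL,BCP$ via the APSP of~\cite{KarczmarzS21} with parameter tuned to $P_t$, the $D_i$ table via $\tilde{O}(n^2)$ Step-A queries costing $\tilde{O}(n^3/h)$ work and $\tilde{O}(1)$ span, and $BV$ via parallel RMQ. However, there is a genuine gap in how you compute $DBV[x,y,i]$. You propose to do it ``exactly as in \textsc{DSO-A}'', i.e.\ by exact SSSP computations on the $n$ auxiliary graphs $G_x$, and to have their cost ``scheduled inside the $P_t$ time window.'' This does not work: span cannot be hidden by scheduling. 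The directed exact SSSP algorithm of~\cite{CaoF23parallel} has span $\tilde{O}(n^{1/2+o(1)})$, and $P_t$ in this lemma is not a wall-clock budget into which independent work can be packed --- it is the span of the given oracle's preprocessing, which in the intended applications is $\tilde{O}(1)$ (\textsc{DSO-B}) or $\tilde{O}(h)$ with $h \ll \sqrt{n}$ (\textsc{DSO-C}). With your $DBV$ step the construction has parallel time $\tilde{O}(n^{1/2+o(1)})$, not $P_t+\tilde{O}(1)$, so the stated bound (and hence Theorem~\ref{thm:pram:algb}) would not follow. There is also a secondary correctness wrinkle: the auxiliary-graph SSSP of~\cite{BernsteinK09dso} returns unbounded-hop replacement distances, whereas the oracle being built is $\tfrac{3}{2}h$-hop limited, so this step is not even the quantity the hop-limited framework calls for.

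The fix, which is what the paper does in Lemma~\ref{lem:pram:lowquery}, is to treat line~\ref{alg:pram:frame:dbv} of Algorithm~\ref{alg:pram:frame} the same way you treat line~\ref{alg:pram:frame:exclude}: each $DBV[x,y,i]$ is a single replacement-distance query (avoiding the bottleneck of interval $i$) to the high-query-time Step-A oracle, so all $\tilde{O}(n^2)$ of them cost $\tilde{O}(n^2\cdot q)=\tilde{O}(n^3/h)$ work and $\tilde{O}(1)$ span, issued in parallel. With that substitution the work is $P_w+\tilde{O}(mn+(n^3/h)+(n/P_t)^3)$ and the span is $P_t+\tilde{O}(1)$ as claimed; the rest of your argument (the sampling union bound and the $O(1)$-work query via the Bernstein--Karger lookup) is fine. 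A smaller point worth tightening: filling $CR,CL,BCP$ ``in $\tilde{O}(n^2)$ additional work'' after a black-box APSP needs justification (the paper instead augments the combining step of~\cite{KarczmarzS21} to carry first-center information), though this is routine compared to the $DBV$ issue.
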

\begin{proof}
    The two-step procedure is as follows:
    The first step, proven in Lemma~\ref{lem:pram:highquery} of Section~\ref{sec:pram:highquery}, constructs a $\frac{3}{2}h$-hop DSO with preprocessing cost of $P_w+{O}(n)$ and $P_t+{O}(1)$ parallel time and cost per query of $\tilde{O}(n/h)$ work and $\tilde{O}(1)$ parallel time.

    The second step, proven in Lemma~\ref{lem:pram:lowquery} of Section~\ref{sec:pram:lowquery}, reduces the query cost to $O(1)$ work. We construct a $\frac{3}{2}h$-hop DSO with $O(1)$ work per query and preprocessing cost of $P_w+\tilde{O}(mn+(n^3/h)+(n/P_t)^3)$ work and $P_t + \tilde{O}(1)$ time.
\end{proof}

\subsubsection{Extended hop DSO with high query time}
\label{sec:pram:highquery}
Assume that we are given a $h$-hop limited DSO with $O(1)$ query time. The following observation is from~\cite{Ren22}: Let $s$ be a vertex on a $\frac{3}{2}h$-hop replacement path from $u$ to $v$ such that $s$ is at most $h$-hops from both $u$ and $v$ along this path. Then, $d_{\frac{3}{2}h}(u,v,e) = d_h(u,s,e)+d_h(s,v,e)$, as the two subpaths $u$-$s$ and $s$-$v$ have distances $d_h(u,s,e)$ and $d_h(s,v,e)$ respectively. We use this result in the following lemma.

\begin{lemma}
    Given an $h$-hop DSO with $P_w$ work and $P_t$ parallel time for preprocessing and $\tilde{O}(1)$ work for query, we can construct a $\frac{3}{2}h$-hop DSO with preprocessing cost $P_w+{O}(n)$ work and $P_t + {O}(1)$ time, which answers a query in $\tilde{O}(n/h)$ work and $\tilde{O}(1)$ time. \label{lem:pram:highquery}
\end{lemma}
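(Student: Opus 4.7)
The plan is to adapt Ren's sampling idea~\cite{Ren22} to the PRAM. The key observation is that if $P$ is a shortest $\frac{3}{2}h$-hop replacement path from $u$ to $v$ avoiding $e$, with $k \le \frac{3}{2}h$ hops, then either $k \le h$ (in which case $d_h(u,v,e) = d_{\frac{3}{2}h}(u,v,e)$ already) or any vertex $s$ at position $i \in [k-h,\,h]$ on $P$ splits $P$ into two sub-paths of at most $h$ hops each, giving $d_{\frac{3}{2}h}(u,v,e) = d_h(u,s,e) + d_h(s,v,e)$. The second case is the interesting one; since $k \le \frac{3}{2}h$, the valid split window contains at least $2h - k + 1 \ge h/2+1$ consecutive vertices of $P$, so hitting it via a sampled set $S \subseteq V$ suffices to recover $d_{\frac{3}{2}h}(u,v,e)$.

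For preprocessing, I would include each vertex of $V$ in $S$ independently with probability $p = c\log n/h$ for a sufficiently large constant $c$, costing $O(n)$ additional work and $O(1)$ additional parallel time on top of the given $h$-hop DSO. With high probability $|S| = \tilde{O}(n/h)$. To answer a query $(u,v,e)$, I would issue in parallel the $h$-hop DSO queries for $d_h(u,v,e)$ and $\{d_h(u,s,e),\, d_h(s,v,e)\}_{s \in S}$---a total of $\tilde{O}(n/h)$ sub-queries, each costing $\tilde{O}(1)$ work in $\tilde{O}(1)$ parallel time---and return the minimum of $d_h(u,v,e)$ and $d_h(u,s,e) + d_h(s,v,e)$ over $s \in S$. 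The final parallel min-reduction over $\tilde{O}(n/h)$ candidates also fits in $\tilde{O}(n/h)$ work and $\tilde{O}(1)$ parallel time, matching the claimed query bounds.

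The correctness analysis is a standard union bound: for each triple $(u,v,e)$, fix a shortest $\frac{3}{2}h$-hop replacement path in advance and observe that $S$ misses the associated $\ge h/2$-vertex split window with probability at most $(1-p)^{h/2} \le n^{-c/2}$; taking $c$ large enough lets a union bound over all $O(n^2 m)$ triples succeed w.h.p. The main subtlety I expect to work through carefully is that $d_h(u,s,e) + d_h(s,v,e)$ is the length of a walk of up to $2h$ hops that may repeat vertices and thus could be strictly shorter than every $\frac{3}{2}h$-hop simple $u$-$v$ path. Consequently the returned value is only guaranteed to lie in $[d(u,v,e),\, d_{\frac{3}{2}h}(u,v,e)]$, not to equal $d_{\frac{3}{2}h}(u,v,e)$ exactly. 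This is however exactly what the recursive extension in Section~\ref{sec:pram:lowquery} needs: each intermediate estimate remains a valid upper bound on $d_{\frac{3}{2}h}$ and a lower bound on $d$, and the two coincide once the hop budget reaches $n$.
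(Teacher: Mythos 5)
Your construction is essentially the paper's own proof: sample each vertex into $S$ with probability $\Theta(\log n/h)$ (so $|S|=\tilde{O}(n/h)$ and only $O(n)$ extra preprocessing work), answer a query by minimizing $d_h(u,s,e)+d_h(s,v,e)$ over $s\in S$ via queries to the given $h$-hop DSO, and argue w.h.p.\ correctness from the fact that the window of path vertices within $h$ hops of both endpoints has length at least $h/2$. Your two refinements---including the direct candidate $d_h(u,v,e)$ to cover replacement paths of at most $h$ hops, and noting that the concatenation may be a non-simple walk of up to $2h$ hops so the returned value is only guaranteed to lie in $[d(u,v,e),\, d_{\frac{3}{2}h}(u,v,e)]$---address details the paper's terser argument glosses over (though your final sentence swaps ``upper'' and ``lower'' relative to that correctly stated interval).
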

\begin{proof}
    Sample each vertex into a set $S$ with probability $\Theta(\log n/h)$ so that any path of length $\frac{h}{2}$ is hit by a sampled vertex w.h.p in $n$. Note that $|S|=\tilde{O}(n/h)$ w.h.p. in $n$. To answer a query, we compute $ d_{\frac{3}{2}h}(u,v,e) = \min_{s \in S} d_h(u,s,e)+d_h(s,v,e)$. This is correct w.h.p. in $n$: The path with distance $d_h(u,s,e)+d_h(s,v,e)$ is a valid $u$-$v$ path not containing $e$ for any $s \in V$. Consider the subpath of the $\frac{3}{2}h$-hop replacement path containing vertices within $h$ hops of both $u$ and $v$. This segment has length of at least $\frac{h}{2}$ and therefore has a sampled vertex $s \in S$ w.h.p. in $n$. This $s$ gives us the correct distance $d_{\frac{3}{2}h}(u,v,e)$ by the above observation.

    The preprocessing algorithm of the $\frac{3}{2}h$-hop DSO is to first sample and store $S$, which takes $O(n)$ work and $O(1)$ time. Then, we run the preprocessing algorithm of the input $h$-hop DSO.
    A query computes the minimum over $O(|S|)=\tilde{O}(n/h)$ $h$-hop distances, which are computed using queries to the $h$-hop DSO. This takes $\tilde{O}(n/h)$ work and $\tilde{O}(1)$ time per $\frac{3}{2}h$-hop query.
\end{proof}

\subsubsection{Reducing to $O(1)$ query time}
\label{sec:pram:lowquery}

Now, we convert a given $h$-hop DSO with high query time into an $h$-hop DSO with $O(1)$ query time using additional preprocessing by implementing Algorithm~\ref{alg:pram:frame}. 

\begin{lemma}
    Given an $h$-hop DSO with $P_w$ work and $P_t$ parallel time for preprocessing and $q$ work and $\tilde{O}(1)$ time for query, we can construct a $\frac{3}{2}h$-hop DSO with $P_w+\tilde{O}(mn+n^2\cdot q+(n/P_t)^3)$ work and $P_t + \tilde{O}(1)$ time for preprocessing and $O(1)$ work for query. \label{lem:pram:lowquery}
\end{lemma}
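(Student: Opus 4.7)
My plan is to implement the Bernstein--Karger framework (Algorithm~\ref{alg:pram:frame}) on top of the given $h$-hop DSO, using oracle look-ups to replace the $\tilde{O}(n)$ excluded-SSSP computations that dominate Algorithm~\ref{alg:pram:alga}. First I would launch the input DSO's preprocessing ($P_w$ work, $P_t$ time) and, running concurrently, invoke the parallel APSP algorithm of~\cite{KarczmarzS21} at the tradeoff point $d=\Theta(P_t)$, which produces all unrestricted distances $d(x,y)$ together with shortest-path trees in $\tilde{O}(mn + (n/P_t)^3)$ work and $\tilde{O}(P_t)$ time. Priorities would then be sampled as in line~\ref{alg:pram:frame:samp}, and a parallel scan along each $x$-$y$ shortest-path tree would fill in $CR[x,y,i]$, $CL[x,y,i]$, and $BCP[x,y]$ in $\tilde{O}(n^2)$ extra work and $\tilde{O}(1)$ time.

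The crucial observation is that the Bernstein--Karger sampling analysis bounds the number of triples $(x,y,e)$ for which $D_i[x,y,e]$ needs to be stored by $\tilde{O}(n^2)$ w.h.p.: priority-$i$ centers number $\tilde{O}(n/2^i)$, each covers $\tilde{O}(2^i)$ edges, and we need one entry per $y\in V$, so telescoping across the $O(\log n)$ priorities yields $\tilde{O}(n^2)$ total. I would populate every $D_i[x,y,e]$ with a single call to the input DSO, at a total cost of $\tilde{O}(n^2 q)$ work and $\tilde{O}(1)$ parallel time. A parallel RMQ preprocessing~\cite{BerkmanV93} combined with the binary-search step of line~\ref{alg:pram:frame:binary} gives every $BV[x,y,i]$ in $\tilde{O}(n^2)$ work, and each $DBV[x,y,i] = d(x,y,BV[x,y,i])$ is read off with one additional oracle call, contributing another $\tilde{O}(n^2 q)$. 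Summing the pieces gives the claimed preprocessing bound of $P_w + \tilde{O}(mn + n^2 q + (n/P_t)^3)$ work and $P_t + \tilde{O}(1)$ time; the stored tables coincide with those used by Algorithm~\ref{alg:pram:frame}, so a query is answered in $O(1)$ work on a single processor by that algorithm's look-up.

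The main obstacle I anticipate is keeping the hop budgets consistent: every sub-path the framework stitches together must fit inside the oracle's hop limit. I would address this by re-running the Bernstein--Karger case analysis with $d_h$ replacing $d$, using the fact that a priority-$i$ center $x$ lies within $\tilde{O}(2^i) \le h$ hops of any edge it covers, so a single oracle query from $x$ to $y$ correctly captures the value stored in $D_i[x,y,e]$, and the two pieces combined in the query formula of Algorithm~\ref{alg:pram:frame} each fit within the targeted hop budget. Verifying that the composed value equals the intended hop-limited replacement distance then completes the correctness argument.
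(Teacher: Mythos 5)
Your construction and cost accounting coincide with the paper's proof: run the input oracle's preprocessing concurrently with the APSP of~\cite{KarczmarzS21} at the tradeoff point $d=\Theta(P_t)$ (giving the $\tilde{O}(mn+(n/P_t)^3)$ term), tabulate the $\tilde{O}(n^2)$ values $D_i[\cdot,\cdot,\cdot]$ and $DBV[\cdot,\cdot,\cdot]$ by direct oracle queries (giving $\tilde{O}(n^2q)$), and use binary search with parallel RMQ for the bottleneck entries; computing $CR/CL/BCP$ by a prefix computation on the shortest-path trees instead of by modifying the APSP combination step is a harmless variant. The problem is your hop-budget argument. You claim that because a priority-$i$ center $c$ lies within $\tilde{O}(2^i)\le h$ hops of any edge $e$ it covers, a query $d_h(c,y,e)$ to the \emph{$h$-hop} oracle "correctly captures" the entry $D_i[c,y,e]$ needed for a $\frac{3}{2}h$-hop oracle. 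That bound controls the hop distance from $c$ to $e$ along the \emph{original} shortest path; it says nothing about the hop length of the replacement path from $c$ to $y$ avoiding $e$, which may detour and use up to $\frac{3}{2}h$ hops. In general $d_h(c,y,e)$ can strictly exceed $d_{\frac{3}{2}h}(c,y,e)$ (e.g., when every low-weight detour from $c$ to $y$ avoiding $e$ uses more than $h$ edges), so an oracle assembled from $h$-hop answers can overestimate $d_{\frac{3}{2}h}(x,y,e)$: all three candidate terms in the query formula of Algorithm~\ref{alg:pram:frame} would then miss the optimal $\frac{3}{2}h$-hop replacement path. The hop extension cannot be obtained for free inside this step.

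The intended reading (and the way Lemma~\ref{lem:pram:lowquery} is actually invoked in the proof of Lemma~\ref{lem:pram:extend}) is that this lemma is purely a \emph{query-time reduction at a fixed hop limit}: the hop extension from $h$ to $\frac{3}{2}h$ is done beforehand by Lemma~\ref{lem:pram:highquery}, whose midpoint-sampling oracle already answers $\frac{3}{2}h$-hop queries, albeit at cost $q=\tilde{O}(n/h)$ per query; it is \emph{that} oracle whose answers you tabulate here, which is exactly where the $n^2\cdot q = \tilde{O}(n^3/h)$ term in Lemma~\ref{lem:pram:extend} comes from. So the fix to your write-up is not a sharper covering argument but a restructuring: require that the input oracle's hop limit already equals the target limit of the oracle being built (replace your $d_h$ queries by queries to the high-query $\frac{3}{2}h$-hop oracle), and drop the claim that $h$-hop queries suffice. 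With that change the rest of your argument goes through as in the paper.
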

\begin{proof}
We show how we implement each preprocessing step of Algorithm~\ref{alg:pram:frame}, after which we can use the query algorithm described there. Line~\ref{alg:pram:frame:binary} is implemented in $\tilde{O}(n^2)$ work and $\tilde{O}(1)$ time as in Section~\ref{sec:pram:alga}. Lines~\ref{alg:pram:frame:apsp},\ref{alg:pram:frame:exclude},\ref{alg:pram:frame:dbv} are implemented differently, as follows.

Line~\ref{alg:pram:frame:apsp}: To compute $CR, CL$ and $BCP$ values for each interval on the shortest path between each vertex pair $x,y \in V$, for a fixed $i$, we do a modified APSP. To compute $CR[x,y,i]$, we track the first center of priority $\ge i$ in addition to $d(x,y)$. When combining distances during APSP, we also update the $CR$ value. This modification is readily done to the PRAM APSP algorithm in~\cite{KarczmarzS21}. The values $CL, BCP$ are computed in a similar way. For $1 \le i \le \log n$, we perform $O(\log n)$ APSP computations which takes $\tilde{O}(mn+(n/P_t)^3)$ work and parallel time $P_t$ using the algorithm of~\cite{KarczmarzS21}.

Lines~\ref{alg:pram:frame:exclude},\ref{alg:pram:frame:dbv}: These lines require computing $h$-hop replacement distances, for which we use queries to the input $h$-hop DSO with $q$ work per query. We run the preprocessing algorithm of the given DSO, in parallel to line~\ref{alg:pram:frame:apsp}. Then,
 in line~\ref{alg:pram:frame:exclude}, we compute distance $d(x,y,e)$ for each edge $e$ on the $x$-$y$ shortest path that is covered by center $x$. The number of such distances for a fixed vertex $y$ is $\le 2n$ as each edge is covered by the endpoint centers of the interval containing it. This is a total of $O(n^2)$ distances for all $y \in V$. Line~\ref{alg:pram:frame:dbv} thus computes $\tilde{O}(n^2)$ distances. So, we have an additional preprocessing cost of $\tilde{O}(n^2q)$ work and $\tilde{O}(1)$ time.
\end{proof}

\subsection{Faster DSO constructions for Dense Graphs}
\label{sec:pram:algb:dso}
We now present our DSO preprocessing algorithms that improve on $\tilde{O}(\sqrt{n})$ parallel time.

\subsubsection{Preprocessing algorithm for DSO in $\tilde{O}(n^3)$ work and $\tilde{O}(1)$ time}
\label{sec:pram:algb:dso1}
In this section, we present algorithm \textsc{DSO-B}, visualized in Figure~\ref{fig:pram:algb}, that takes $\tilde{O}(n^3)$ work and $\tilde{O}(1)$ parallel time to construct an oracle of size $\tilde{O}(n^2)$, that can answer a query in $O(1)$ time. We first construct a 2-hop limited DSO with high query time and apply Lemma~\ref{lem:pram:lowquery} to obtain a 2-hop DSO with $O(1)$ query time. Then, we repeatedly apply Lemma~\ref{lem:pram:extend} for $O(\log n)$ steps until we obtain a $n$-hop DSO -- which is just a complete DSO.

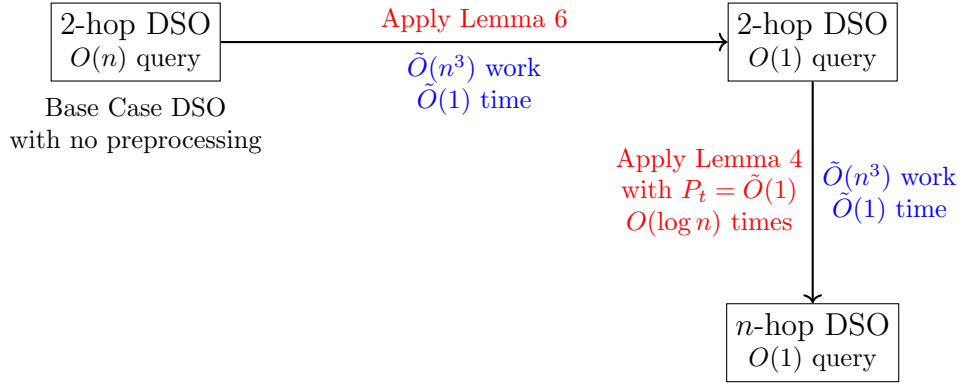
\begin{figure}
    \centering
    \begin{tikzpicture}
        \tikzset{shp/.style={color=black,draw}}
        \tikzset{every text node part/.style={align=center}}
        \tikzstyle{vertex}=[circle, draw=black, fill=black!50, minimum size=4pt, line width=0.3pt, inner sep=0pt]
        \tikzstyle{gedge}=[line width=0.7pt]

        % \node[vertex] (v1) at (-1,-2.5) {};
        % \node  [below=0.05cm of v1] {$x$};
        % \node[vertex, fill=red!50] (v2) at (0,-2) {};
        % \node[vertex] (v3) at (1,-2.5) {};
        % \node  [below=0.05cm of v3] {$y$};
        % \path[draw,thick,->,gedge] (v1) edge (v2);
        % \path[draw,thick,->,gedge] (v2) edge (v3);
        
        \node (h1) [shape=rectangle,shp] at (0,0) {{\large $2$-hop DSO} \\ $O(n)$ query};
        \node (h1txt) [below=0.1cm of h1] {Base Case DSO \\ with no preprocessing };

        \node (h2) [shape=rectangle,shp] at (9,0) {{\large $2$-hop DSO} \\ $O(1)$ query};
        \path[draw, thick, ->] (h1) edge node [above] {\textcolor{red}{Apply Lemma~\ref{lem:pram:lowquery}}} node [below] {\textcolor{blue}{$\tilde{O}(n^3)$ work} \\ \textcolor{blue}{$\tilde{O}(1)$ time}} (h2) ;

        \node (h3) [shape=rectangle,shp] at (9,-4) {{\large $n$-hop DSO} \\ $O(1)$ query};
        \path[draw, thick, ->] (h2) edge node [left] {\textcolor{red}{Apply Lemma~\ref{lem:pram:extend}} \\ \textcolor{red}{with $P_t=\tilde{O}(1)$} \\ \textcolor{red}{ $O(\log n)$ times}} node [right] {\textcolor{blue}{$\tilde{O}(n^3)$ work} \\ \textcolor{blue}{$\tilde{O}(1)$ time}} (h3) ;

        % \node (h4) [shape=rectangle,shp] at (9,-8) ;
        % \path[draw, thick, ->, dashed] (h3) edge node [left] {\textcolor{red}{Apply Lemma~\ref{lem:pram:extend}} \\ \textcolor{red}{ $O(\log n)$ times}} node [right] {\textcolor{blue}{$\tilde{O}(n^3)$ work} \\ \textcolor{blue}{$\tilde{O}(1)$ time}} (h4) ;

    \end{tikzpicture}
    \caption{Visual representation of preprocessing algorithm \textit{DSO-B}.}
    \label{fig:pram:algb}
\end{figure}

\textit{Base Case:}
Our base DSO is a 2-hop limited DSO, with a query cost of $O(n)$ work, and $\tilde{O}(1)$ parallel time. This needs no preprocessing, as $d_2(x,y,e) = \min_{s \in V} w(x,s)+w(s,y)$, where the minimum is over $s \in V$ such that $(x,s), (s,y) \in E$ and $e \ne (x,s), (s,y)$. We now apply Lemma~\ref{lem:pram:lowquery} to reduce this query time, with parameters $P_t=\tilde{O}(1), q=n$, to obtain a 2-hop DSO with preprocessing cost of $\tilde{O}(n^3)$ work and $\tilde{O}(1)$ parallel time and $O(1)$ work query.

\textit{Obtaining complete DSO:}
We apply Lemma~\ref{lem:pram:extend} with $P_w=\tilde{O}(n^3)$ and $P_t=\tilde{O}(1)$. We repeat this procedure $O(\log n)$ times until we obtain a $n$-hop DSO with $O(1)$ query. Over all $O(\log n)$ steps, we get a total of $\tilde{O}(n^3)$ work and $\tilde{O}(1)$ parallel time. 

Note that the final query is just looking up $O(1)$ values precomputed by the $n$-hop DSO. Thus, we only maintain preprocessed data for the $n$-hop DSO, which has size $\tilde{O}(n^2)$, and the intermediate DSOs can be discarded.

\subsubsection{DSO with improved work-time tradeoff}
\label{sec:pram:algb:dso2}
In this section, we present algorithm \textsc{DSO-C}, visualized in Figure~\ref{fig:pram:algc}, that performs $\tilde{O}(mnh + (n^3/h))$ work and $\tilde{O}(h)$ parallel time for preprocessing, for any $2 \le h \le n$, to construct an oracle of size $\tilde{O}(n^2)$ that answers a query in $O(1)$ work.
Our preprocessing algorithm follows a scheme similar to \textsc{DSO-B} in the previous section, but uses a different base case.

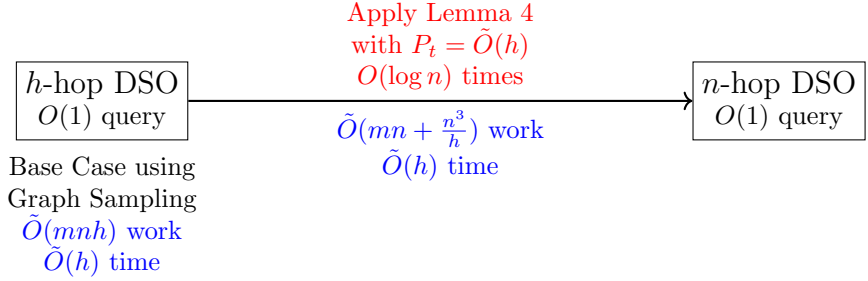
\begin{figure}
    \centering
    \begin{tikzpicture}
        
        \tikzset{shp/.style={color=black,draw}}
        \tikzset{every text node part/.style={align=center}}
        
        \node (h1) [shape=rectangle,shp] at (0,0) {{\large $h$-hop DSO} \\ $O(1)$ query};
        \node (h1txt) [below=0.1cm of h1,align=left] {Base Case using \\ Graph Sampling \\ \textcolor{blue}{$\tilde{O}(mnh)$ work} \\ \textcolor{blue}{$\tilde{O}(h)$ time}};\

        \node (h2) [shape=rectangle,shp] at (9,0) {{\large $n$-hop DSO} \\ $O(1)$ query};
        \path[draw, thick, ->] (h1) edge node [above] {\textcolor{red}{Apply Lemma~\ref{lem:pram:extend}} \\ \textcolor{red}{with $P_t=\tilde{O}(h)$} \\ \textcolor{red}{ $O(\log n)$ times}} node [below] {\textcolor{blue}{$\tilde{O}(mn + \frac{n^3}{h})$ work} \\ \textcolor{blue}{$\tilde{O}(h)$ time}} (h2) ;

        % \node (h4) [shape=rectangle,shp] at (9,-5) ;
        % \path[draw, thick, ->, dashed] (h2) edge node [left] {\textcolor{red}{Apply Lemma~\ref{lem:pram:extend}} \\ \textcolor{red}{ $O(\log n)$ times}} node [right] {\textcolor{blue}{$\tilde{O}(mn + \frac{n^3}{h})$ work} \\ \textcolor{blue}{$\tilde{O}(h)$ time}} (h4) ;
    \end{tikzpicture}
    \caption{Visual representation of preprocessing algorithm \textit{DSO-C}.}
    \label{fig:pram:algc}
\end{figure}

\textit{Base Case:}
% \vignesh{Changed reference to CONGEST DSO ARXIV paper, and included Lemma result.}
We present a method to construct an $h$-hop DSO using a graph sampling approach used in a distributed \congest DSO algorithm in~\cite{ManoharanR25dsoarxiv}, inspired by results in~\cite{WeimannY13}, captured by the following Lemma:

\begin{lemma}[Lemma 10, \cite{ManoharanR25dsoarxiv}]
    Sample $\tilde{h}=15 h \log n$ graphs $G_1,\cdots, G_{\tilde{h}}$ as subgraphs of $G$ by including each edge of $G$ with probability $1-\frac{1}{h}$. Then,
    \begin{enumerate}[label=\Alph*.]
        \item W.h.p. in $n$, there are $\Theta(\log n)$ graphs $G_i$ such that $e \not\in G_i$.
        \item Let $P_{sx}^e$ be a $h$-hop replacement path from $s$ to $x$ avoiding $e$. W.h.p. in $n$, there is at least one graph $G_i$ such that $G_i$ contains all edges in $P_{sx}^e$ but does not contain $e$.
    \end{enumerate}
    \label{lem:sampgraph}
\end{lemma}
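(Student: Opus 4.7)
Both parts of the lemma reduce to standard Chernoff-plus-union-bound calculations; the only subtlety is checking that the constant $15$ in $\tilde{h} = 15 h \log n$ is chosen large enough to absorb the relevant union bounds.

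For part A, I would fix an edge $e$ and observe that the event $\{e \notin G_i\}$ occurs independently across $i$ with probability $1/h$. Thus $X_e := |\{i : e \notin G_i\}|$ is a sum of $\tilde{h}$ i.i.d.\ Bernoulli's with mean $E[X_e] = \tilde{h}/h = 15 \log n$. A multiplicative Chernoff bound gives $\Pr[X_e \notin [\tfrac{1}{2}\cdot 15\log n,\; \tfrac{3}{2}\cdot 15\log n]] \le 2 \exp(-\Omega(\log n))$, which can be driven below $n^{-c}$ for any fixed constant $c$ by enlarging the constant in $\tilde{h}$. A union bound over the $m \le n^2$ edges then shows that every edge is absent from exactly $\Theta(\log n)$ of the $G_i$'s \whp.

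For part B, fix a triple $(s,x,e)$ and the specified replacement path $P = P_{sx}^e$, with $|P| \le h$. Because $P$ avoids $e$, the sampling decisions for the edges of $P$ are independent of the decision for $e$, so
\[
\Pr[P \subseteq G_i \text{ and } e \notin G_i] \;=\; (1 - 1/h)^{|P|} \cdot \tfrac{1}{h} \;\ge\; (1 - 1/h)^h \cdot \tfrac{1}{h} \;\ge\; \frac{1}{4h},
\]
using $(1-1/h)^h \ge 1/4$ for $h \ge 2$ (the $h = 1$ case is trivial since a 1-hop replacement path is a single edge). Letting $Y$ denote the number of such ``good'' indices $i$, we get $E[Y] \ge \tilde{h}/(4h) = \tfrac{15}{4}\log n$, so a lower-tail Chernoff bound gives $\Pr[Y = 0] \le \exp(-\Omega(\log n)) = n^{-\Omega(1)}$. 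A union bound over the $O(n^2 m) = O(n^4)$ possible triples $(s,x,e)$ then yields the \whp conclusion.

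The main thing I would verify is that the two Chernoff applications are simultaneously compatible with a single choice of sampling constant: both require the relevant mean count to be $\Omega(\log n)$ with a hidden constant large enough to beat a $\mathrm{poly}(n)$ union bound, and $\tilde{h} = 15 h \log n$ comfortably delivers this. There is no deeper obstacle; the argument follows the same template as standard oblivious edge-sparsification analyses (e.g.\ fault-tolerant spanner constructions).
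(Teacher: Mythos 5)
Your argument is correct: the paper does not prove this lemma at all (it is imported verbatim as Lemma~10 of~\cite{ManoharanR25dsoarxiv}), and your Chernoff-plus-union-bound derivation is exactly the standard Weimann--Yuster-style sampling analysis that underlies the cited result, including the key observation that $e \notin P_{sx}^e$ makes the two events in part~B independent. The only caveat is the one you already flag: with the literal constant $15$ the exponents are only marginally sufficient against the $n^{2}$ and $n^{4}$ union bounds (and depend on the base of the logarithm and the width chosen for the $\Theta(\log n)$ window), so the constants may need mild adjustment, which does not affect the asymptotic claim.
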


We sample $\tilde{O}(h)$ graphs $G_i$ by removing every edge with probability $(1/h)$, independently at random. Then, any path $\mathcal{P}$ of $h$ hops not containing edge $e$, w.h.p. in $n$, there is at least one graph $G_i$  that includes all edges of $\mathcal{P}$ but not $e$. Additionally for a specific edge $e$, w.h.p. in $n$, there are only $O(\log n)$ graphs $G_i$ that do not contain $e$. If $d_i(x,y)$ is the shortest path distance in $G_i$, then $d(x,y,e) = \min_{i : e \not\in G_i} d_i(x,y)$, w.h.p. in $n$. 

To construct an $h$-hop DSO, we sample $\tilde{O}(h)$ graphs as described above and then compute $h$-hop limited APSP distances in each sampled graph $G_i$.  
We compute $h$-hop SSSP from each vertex of $G_i$ in $\tilde{O}(mn)$ work and $\tilde{O}(h)$ parallel time using an $\tilde{O}(m)$-work $\tilde{O}(h)$-time hop-limited shortest path algorithm (Lemma 5.3 of~\cite{CaoF23parallel}). 
We also store the identities of the set of graphs $G_i$ that do not contain $e$, for each edge $e \in E$. Thus, after preprocessing, we can answer a query in $O(\log n)$ time using the precomputed $d_i(x,y)$ distances. Since we have $\tilde{O}(h)$ graphs, we obtain a DSO with preprocessing cost of $\tilde{O}(mnh)$ work and $\tilde{O}(h)$ parallel time, with $\tilde{O}(1)$ work query. 

\textit{Obtaining complete DSO:}
Similar to \textsc{DSO-B}, we repeatedly apply Lemma~\ref{lem:pram:extend} $O(\log n)$ times to construct an $n$-hop DSO from an $h$-hop DSO. We use parameters $P_w = \tilde{O}(mnh)$ and $P_t = \tilde{O}(h)$. In an extension step where we construct a $\frac{3}{2}k$-hop DSO from a $k$-hop DSO for some $k \ge h$, we incur an additional preprocessing cost of $\tilde{O}(\frac{n^3}{k}+\frac{n^3}{k^3}) = \tilde{O}(\frac{n^3}{h})$ work and $\tilde{O}(h)$ parallel time. Adding up the costs for $O(\log n)$ steps, preprocessing cost is $\tilde{O}(mnh + \frac{n^3}{h})$ work and $\tilde{O}(h)$ parallel time. We only store preprocessed data for the final $n$-DSO, so the oracle has size $\tilde{O}(n^2)$.

\section{PRAM algorithms for other graph problems} 
\label{sec:pram:algother}

We present PRAM algorithms for other graph problems in the sequential $\tilde{O}(mn)$ time fine-grained complexity class: RPaths, 2-SiSP, 2-APSiSP, MWC and ANSC. We define these problems as follows:

% \vignesh{Definitions from thesis, modified for just PRAM.}

\begin{restatable}{definition}{defrpaths}
\textit{Replacement Paths (RPaths)} : Given a pair of vertices $s^*,t^* \in V$ and a shortest path $P_{s^*t^*}$ from $s^*$ to $t^*$, for each edge $e \in P_{s^*t^*}$, compute $d(s^*,t^*,e)$, the $s^*$-$t^*$ shortest path distance when $e$ is removed.
    \label{def:rpaths}
\end{restatable}

\begin{restatable}{definition}{defsisp}
\textit{Second Simple Shortest Path (2-SiSP)} : Given a pair of vertices $s^*,t^*$ and a shortest path $P_{s^*t^*}$ from $s^*$ to $t^*$, compute the weight $d_2(s^*,t^*)$ of a shortest simple path $\mathcal{P}_2$ from $s^*$ to $t^*$ that differs from $P_{s^*t^*}$ by at least one edge.
    \label{def:sisp}
\end{restatable}

\begin{definition}
    \textit{All Pairs Second Simple Shortest Path (2-APSiSP)}: 
    Compute $d_2(x,y)$, the $x$-$y$ second simple shortest path (2-SiSP) distance, for all pairs of vertices $x,y \in V$.
\end{definition}

\begin{restatable}{definition}{defmwc}
\textit{Minimum Weight Cycle problem (MWC)}: Compute the weight of a shortest simple cycle in $G$.
    \label{def:mwc}
\end{restatable}

\begin{restatable}{definition}{defansc}
\textit{All Nodes Shortest Cycles (ANSC)}: For each vertex $u \in V$, compute the weight of a shortest simple cycle through $u$ in $G$.
    \label{def:ansc}
\end{restatable}

Our main strategy in developing PRAM algorithms for these problems is to adapt existing sequential algorithms, leveraging work-optimal PRAM subroutines for APSP~\cite{KarczmarzS21} and SSSP~\cite{CaoF23parallel}. 

\subsection{RPaths and 2-SiSP}

In this section, we present our algorithms for RPaths and 2-SiSP in different graph types. Our algorithm for directed weighted RPaths and 2-SiSP uses a subroutine for APSP, adapting a sequential result from~\cite{AgarwalR18finegrained}. Our directed unweighted algorithm adapts a sequential result from~\cite{RodittyZ12} and uses SSSP as a subroutine. Finally, our undirected (weighted or unweighted) algorithm uses SSSP in addition to a characterization of replacement paths from~\cite{KatohIM82}. 

\begin{restatable}{theorem}{thmpramrpaths}
    Given a graph $G=(V,E)$, vertices $s^*, t^*$, we can compute RPaths and 2-SiSP in the PRAM model in the following bounds:
    \begin{enumerate}[label=\Alph*.,ref=\Alph*]
        \item \label{thm:pram:rpaths:dirwt} If $G$ is directed and weighted, the algorithm takes $\tilde{O}(mn)$ work and $\tilde{O}(n^{1/3})$ parallel time.
        \item \label{thm:pram:rpaths:dirunwt} If $G$ is directed and unweighted, the algorithm takes $\tilde{O}(m\sqrt{n})$ work and $\tilde{O}(\sqrt{n})$ parallel time.
        \item \label{thm:pram:rpaths:undir} If $G$ is undirected (weighted or unweighted), the algorithm takes $\tilde{O}(m)$ work and $\tilde{O}(\sqrt{n})$ parallel time.
    \end{enumerate}
\end{restatable}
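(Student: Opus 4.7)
The plan is to prove each of the three parts separately, parallelizing a different sequential algorithm in each case and invoking the PRAM primitives for APSP~\cite{KarczmarzS21} and SSSP~\cite{CaoF23parallel}. In all three cases, 2-SiSP is obtained as the parallel min-reduction of the RPaths output over $e \in P_{s^*t^*}$, in $O(n)$ extra work and $O(\log n)$ span, so I focus on RPaths.

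For Part A (directed weighted), the plan is to invoke the PRAM APSP of~\cite{KarczmarzS21} with parameter $d = n^{1/3}$, giving $\tilde{O}(mn + (n/n^{1/3})^3) = \tilde{O}(mn + n^2) = \tilde{O}(mn)$ work and $\tilde{O}(n^{1/3})$ parallel time (using $m \ge n-1$). On top of the resulting APSP table, I would perform the per-edge processing from the sequential $\tilde{O}(mn)$-time RPaths algorithm of~\cite{AgarwalR18finegrained}: for each edge $e_i \in P_{s^*t^*}$ in parallel, combine APSP distances with the appropriate shortest-path-tree information (distances from $s^*$ in $G$ and to $t^*$ in the reverse of $G$, with bookkeeping of which tree paths are affected by deleting $e_i$) to obtain $d(s^*, t^*, e_i)$.

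For Part B (directed unweighted), the plan is to parallelize the $\tilde{O}(m\sqrt{n})$-time algorithm of~\cite{RodittyZ12}. Sample $\tilde{O}(\sqrt{n})$ pivots evenly spaced along $P_{s^*t^*}$; for each pivot $p$, compute SSSP from $p$ (and to $p$ in the reverse graph) in appropriately modified copies of $G$ using the PRAM SSSP of~\cite{CaoF23parallel}. The $\tilde{O}(\sqrt{n})$ SSSPs run simultaneously in $\tilde{O}(m\sqrt{n})$ total work and $\tilde{O}(\sqrt{n})$ parallel time. For each $e_i$, the replacement distance is either a short detour (directly enumerable within $\sqrt{n}$ hops of $P_{s^*t^*}$) or a long one which must contain a sampled pivot (reconstructible from the precomputed SSSPs); combining these cases takes a further $\tilde{O}(m)$ work and $\tilde{O}(\sqrt{n})$ parallel time.

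For Part C (undirected), the plan is to exploit the characterization from~\cite{KatohIM82}: the replacement path avoiding $e_i$ traverses $P_{s^*t^*}$ from $s^*$ to some vertex $x$, uses a non-tree edge $(x,y)$ whose fundamental cycle (with respect to the shortest-path tree) contains $e_i$, and traverses $P_{s^*t^*}$ from $y$ to $t^*$. Compute SSSP from $s^*$ and to $t^*$ in $G$ via~\cite{CaoF23parallel} in $\tilde{O}(m)$ work and $\tilde{O}(\sqrt{n})$ parallel time; then for each non-tree edge $(x,y)$, label the interval of $P_{s^*t^*}$-edges it crosses via Euler-tour / LCA bookkeeping on the shortest-path trees. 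Using parallel RMQ~\cite{BerkmanV93} over the resulting intervals, extract for each $e_i$ the minimum-cost valid detour $d(s^*, x) + w(x,y) + d(y, t^*)$ in $\tilde{O}(m+n)$ work and polylogarithmic parallel time.

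The main obstacle is Part A: making sure the per-edge combining step from~\cite{AgarwalR18finegrained} fits within the $\tilde{O}(n^{1/3})$ parallel time budget once the APSP table is on hand. Since the combining is local to each $e_i \in P_{s^*t^*}$ and the edges can be processed independently in parallel, the hope is only polylog span overhead, but each per-edge routine must itself be verified to be low-depth by direct inspection of the~\cite{AgarwalR18finegrained} construction. A secondary subtlety is Part C's interval-labeling of non-tree edges, which reduces to standard PRAM tree primitives (Euler tour, LCA, sorting) but must be implemented so the total work stays $\tilde{O}(m)$.
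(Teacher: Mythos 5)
Your proposal is correct and follows essentially the same route as the paper for all three parts: the \cite{AgarwalR18finegrained} reduction to APSP solved via \cite{KarczmarzS21} with $d=n^{1/3}$ for the directed weighted case, a parallelization of \cite{RodittyZ12} using $\tilde{O}(\sqrt{n})$ simultaneous SSSP computations via \cite{CaoF23parallel} for the directed unweighted case, the \cite{KatohIM82} characterization plus SSSP for the undirected case, and a final min-reduction for 2-SiSP. The only point worth noting is that your flagged ``main obstacle'' in Part A dissolves: the reduction of \cite{AgarwalR18finegrained} constructs a modified graph in $O(m+n)$ work and $O(1)$ depth and the replacement distances are then read off from the APSP table of that graph, so no nontrivial per-edge combining depth needs to be verified.
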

\begin{proof}

    \begin{enumerate}[label=\Alph*.,ref=\Alph*]

    \item For directed weighted RPaths, we employ a reduction from RPaths to APSP in~\cite{AgarwalR18finegrained}. A similar reduction was explored for a distributed directed weighted RPaths algorithm in~\cite{ManoharanR24rpaths}; here, we adapt it to the PRAM model. This reduction involves constructing a modified graph, a step that requires only $O(m+n)$ work and $O(1)$ parallel time in the PRAM model. The dominant cost is the subsequent APSP computation. By applying the work-optimal sparse APSP algorithm of~\cite{KarczmarzS21} (which takes $\tilde{O}(mn)$ work and $\tilde{O}(n^{1/3})$ parallel time), we achieve the stated bounds. Computing 2-SiSP is straightforward once RPaths has been computed, as we just compute the minimum among replacement path distances in $O(n)$ work and $\tilde{O}(1)$ parallel time.

    \item For directed unweighted RPaths, we implement a sequential algorithm in~\cite{RodittyZ12} on the PRAM. This algorithm handles short and long paths separately. For short paths, it constructs $\sqrt{n}$ auxiliary graphs and computes SSSP in each. For long paths, it samples $\tilde{\Theta}(\sqrt{n})$ vertices and computes SSSP from each sampled source. In the PRAM model, these $\tilde{O}(\sqrt{n})$ SSSP computations can be executed in parallel. We can readily verify that the computation apart from the SSSP subroutines can be performed in $O(m+n)$ work and $\tilde{O}(1)$ parallel time.
    Using the $\tilde{O}(m)$ work and $\tilde{O}(\sqrt{n})$ time SSSP algorithm from~\cite{CaoF23parallel} for each of the $\tilde{O}(\sqrt{n})$ instances yields the total $\tilde{O}(m\sqrt{n})$ work and $\tilde{O}(\sqrt{n})$ parallel time. 
    
    Whether this parallel time can be improved while maintaining work-optimality remains an interesting open problem.
    
    \item For undirected RPaths, a characterization of replacement paths in~\cite{KatohIM82} gives us a method to compute RPaths using SSSP computations. Implementing this algorithm in the PRAM setting, using the work-optimal SSSP algorithm of~\cite{CaoF23parallel}, yields a PRAM algorithm with $\tilde{O}(m)$ work and $\tilde{O}(\sqrt{n})$ parallel time.
    
    \end{enumerate}
\end{proof}

\subsection{MWC and ANSC}  

We now present our PRAM algorithms for MWC and ANSC. For both directed graphs and undirected graphs, we use a reduction to APSP and then apply the PRAM APSP algorithm of~\cite{KarczmarzS21}. 

\begin{restatable}{theorem}{thmprammwc}
    Given a graph $G=(V,E)$, we can compute MWC and ANSC in the PRAM model in the $\tilde{O}(mn)$ work and $\tilde{O}(n^{1/3})$ parallel time.
\end{restatable}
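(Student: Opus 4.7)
The plan is to reduce MWC and ANSC to APSP, and then apply the Karczmarz--Sankowski PRAM APSP algorithm~\cite{KarczmarzS21}, which runs in $\tilde{O}(mn + (n/d)^3)$ work and $\tilde{O}(d)$ parallel time for a parameter $d \in [1,n]$. Choosing $d = n^{1/3}$ gives work $\tilde{O}(mn + n^2) = \tilde{O}(mn)$ and parallel time $\tilde{O}(n^{1/3})$, matching the stated bounds. The remaining task is therefore to show that both MWC and ANSC can be extracted from the APSP output within these same asymptotic bounds.

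For the directed case, once the APSP table $d(\cdot,\cdot)$ is computed, MWC equals $\min_{(u,v) \in E}\bigl(w(u,v) + d(v,u)\bigr)$: for each edge $(u,v)$, concatenating a shortest $v$-to-$u$ path (which is simple by non-negativity of weights) with the edge $(u,v)$ yields a simple cycle of this weight, and conversely every simple cycle contains some edge $(u,v)$ whose removal leaves a $v$-to-$u$ walk of weight at least $d(v,u)$. For ANSC at a fixed vertex $u$, I would restrict the same minimum to the out-edges incident to $u$. Each post-processing step is a parallel reduction over $m$ edges (or over the out-degree of each vertex in parallel), costing $\tilde{O}(m)$ work and $\tilde{O}(1)$ parallel time.

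For the undirected case the direct formula fails, since $d(v,u)$ may equal $w(u,v)$ itself and yield a degenerate two-edge ``cycle'' traversing the same edge twice. I would instead use the classical shortest-path-tree characterization: obtain a shortest-path tree $T_v$ rooted at each vertex $v$, and for every non-tree edge $(x,y)$ form the candidate cycle consisting of the tree paths from $v$ to $x$ and from $v$ to $y$ together with $(x,y)$; this has weight $d(v,x) + w(x,y) + d(v,y)$ and is a simple cycle through $v$ exactly when $v$ is the LCA of $x$ and $y$ in $T_v$. Taking the minimum over such non-tree edges gives the shortest simple cycle through $v$ (ANSC), and further minimizing over $v$ gives MWC. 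The $n$ LCA preprocessings are handled by the parallel Berkman--Vishkin RMQ structure already invoked in Algorithm~\ref{alg:pram:alga}, and scanning the $m$ non-tree edges for each of the $n$ trees contributes $\tilde{O}(mn)$ additional work and $\tilde{O}(1)$ additional parallel time on top of APSP.

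The directed case is essentially immediate once APSP is in hand. The principal care is in the undirected case, where one must verify that the LCA condition precisely certifies simplicity of the recovered cycle, and that ties in edge weights can be broken by a standard symbolic perturbation so that the heaviest-edge-of-minimum-cycle argument underlying this characterization is clean. A secondary technical point is that~\cite{KarczmarzS21} outputs distances but not predecessor information; recovering shortest-path trees for the undirected reduction requires one additional parallel pass, which is absorbed within the $\tilde{O}(mn)$ work and $\tilde{O}(n^{1/3})$ time budget.
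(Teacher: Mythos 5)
Your reduction-to-APSP strategy, the choice $d=n^{1/3}$ in the Karczmarz--Sankowski bound, and the directed-case post-processing ($\min_{(u,v)\in E} w(u,v)+d(v,u)$ per vertex, $\tilde{O}(m+n^2)$ extra work, $\tilde{O}(1)$ extra time) coincide with the paper's proof. Where you genuinely diverge is the undirected case: the paper does not give its own construction there, instead citing the more involved APSP-based reductions of Agarwal--Ramachandran and Williams--Williams and noting only that they reduce to minima over $O(n^2)$ values, whereas you give a self-contained shortest-path-tree argument (per root $v$, minimize $d(v,x)+w(x,y)+d(v,y)$ over non-tree edges $(x,y)$ whose LCA in $T_v$ is $v$). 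This buys an explicit, elementary proof at the cost of extra machinery the paper avoids (recovering $n$ shortest-path trees from distance-only APSP output, $n$ Euler-tour/RMQ LCA structures), all of which indeed fits in the $\tilde{O}(mn)$ work and $\tilde{O}(n^{1/3})$ time budget. One caveat: the soundness direction of your LCA criterion is immediate (LCA $=v$ certifies internally disjoint tree paths, hence a simple cycle through $v$), but the completeness direction --- that the minimum over such admissible edges actually attains ANSC$(v)$ --- is asserted rather than proved, and the ``heaviest-edge-of-minimum-cycle plus symbolic perturbation'' argument you gesture at is not the right tool (that argument is tailored to the \emph{global} minimum cycle, where shortest paths cannot shortcut the cycle; for a fixed $v$ they can). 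The claim is nevertheless true and needs no perturbation: let $C$ be a shortest cycle through $v$ with $v$-incident edges $(v,p)$ and $(v,q)$; if either is a non-tree edge of $T_v$, that edge itself is an admissible candidate of value at most $w(C)$; if both are tree edges, then $p$ and $q$ lie in distinct child subtrees of $v$, so walking along $C$ from $p$ to $q$ there is a first edge $(x,y)$ whose endpoints lie in different child subtrees --- this edge is non-tree, has LCA $v$, and satisfies $d(v,x)+w(x,y)+d(v,y)\le w(C)$ by bounding $d(v,x)$ and $d(v,y)$ by the two cycle arcs. With that lemma supplied, your route is a complete and correct alternative to the citations used in the paper.
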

\begin{proof}
    The reduction from directed ANSC to APSP requires computing $\min_{u \in V} d(u,v)+w(v,u)$ after all APSP distances $d(u,v)$ are computed. Clearly, the minimum computation can be implemented in PRAM in $\tilde{O}(n^2)$ work and $\tilde{O}(1)$ time. MWC can be readily computed after ANSC distances are computed.
    
    A similar reduction from undirected ANSC to APSP is more involved but again involves computing minimums among $O(n^2)$ values (see~\cite{AgarwalR18finegrained,WilliamsW18}).

    In both cases, the bottleneck is the initial APSP computation. Therefore, applying the APSP algorithm of~\cite{KarczmarzS21} directly yields the $\tilde{O}(mn)$ work and $\tilde{O}(n^{1/3})$ parallel time bounds.
\end{proof}

\subsection{2-APSiSP}
\label{sec:apsisp}

For 2-APSiSP, a work-optimal algorithm can be obtained by implementing a sequential algorithm of~\cite{AgarwalR16sisp}, using ideas from a distributed CONGEST algorithm for 2-APSiSP in~\cite{ManoharanR25dso}(TODO Alg refreence). However, its parallel time of $\tilde{O}(n)$ is notably slower than the $\tilde{O}(n^{1/3})$ and $\tilde{O}(\sqrt{n})$ bounds achieved for the other problems. We leave improving the parallel time for a work-optimal algorithm as an open problem.

Another approach to compute 2-APSiSP is to apply our DSO algorithms, and then use DSO queries to compute 2-SiSP distances. Using DSO-B gives us an $\tilde{O}(n^3)$ work and $\tilde{O}(1)$ parallel time algorithm, which is work-optimal for dense graphs.

\begin{restatable}{theorem}{thmpramapsisp}
    Given a directed weighted graph $G=(V,E)$, we can compute 2-APSiSP in the PRAM model in:
    \begin{enumerate}[label=\Alph*.,ref=\Alph*]
        \item $\tilde{O}(mn)$ work and $\tilde{O}(n)$ parallel time.
        \item $\tilde{O}(n^3)$ work and $\tilde{O}(1)$ parallel time.
    \end{enumerate}
\end{restatable}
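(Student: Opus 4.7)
For Part B, the plan is to exploit \textsc{DSO-B} (Theorem~\ref{thm:pram:algb}) directly. First, I would preprocess $G$ with \textsc{DSO-B}, obtaining an $\tilde{O}(n^2)$-sized oracle that answers any $d(x,y,e)$ query in $O(1)$ work, at a cost of $\tilde{O}(n^3)$ work and $\tilde{O}(1)$ parallel time. In parallel with this, I would compute APSP with parent pointers using the $\tilde{O}(n^3)$-work, $\tilde{O}(1)$-time PRAM APSP algorithm of~\cite{HanPR97}, so that for every ordered pair $(x,y)$ I have the edge list of a canonical shortest path $P_{xy}$ with $|P_{xy}| \le n-1$. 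Then for each of the $n^2$ pairs, I would in parallel issue one DSO query $d(x,y,e)$ per edge $e \in P_{xy}$ and set $d_2(x,y) = \min_{e \in P_{xy}} d(x,y,e)$ via a parallel min-reduction. The total number of queries is $O(n^3)$, each of $O(1)$ work, and the reductions add $O(n^3)$ work and $\tilde{O}(1)$ parallel time, matching the claimed bounds. Note that the intermediate storage remains $\tilde{O}(n^2)$ for the oracle plus $O(n^2)$ for the $d_2$ table, so we do not need to materialize the $n^3$ per-edge query answers explicitly.

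For Part A, the plan is to parallelize the sequential $\tilde{O}(mn)$-time 2-APSiSP algorithm of~\cite{AgarwalR16sisp}, guided by the CONGEST implementation in~\cite{ManoharanR25dso}. The first step is to compute APSP along with all outgoing shortest-path trees $T_s$ using the work-optimal sparse APSP algorithm of~\cite{KarczmarzS21} at a cost of $\tilde{O}(mn)$ work and $\tilde{O}(n^{1/3})$ parallel time; this yields the edges of $P_{st}$ for every pair $(s,t)$ as a byproduct. Next, I would run the Agarwal-Ramachandran ``detour'' procedure independently and concurrently from each source $s$: process the targets $t$ in non-decreasing order of $d(s,t)$ along $T_s$, and for each $t$ combine the best non-tree edge leaving $T_s$ near $t$ with previously established second-shortest information at its endpoints to recover $d_2(s,t)$. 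Handling all $n$ sources in parallel keeps total work at $\tilde{O}(mn)$.

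The main obstacle is the parallel time in Part A. Within a single source $s$, the detour procedure appears to require distance-ordered processing: finalizing $d_2(s,t)$ depends on second-shortest information already established at landing points strictly closer to $s$ than $t$ is, inducing a chain of dependencies of length $\Theta(n)$ in the worst case, which is exactly why the CONGEST round complexity is $\tilde{O}(n)$. My plan to meet the stated $\tilde{O}(n)$ bound is to batch all targets at a common distance level and process them in $\tilde{O}(1)$ parallel time via min-reductions over the incident non-tree edges whose other endpoints are already finalized, mirroring the pipelined $\tilde{O}(n)$-round structure of~\cite{ManoharanR25dso}. This yields $\tilde{O}(n)$ parallel time per source and, since the $n$ sources run concurrently in shared memory, $\tilde{O}(n)$ parallel time overall with $\tilde{O}(mn)$ work. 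Driving the parallel time below $\tilde{O}(n)$ while preserving work optimality is precisely the open problem flagged in the excerpt, and I do not attempt to resolve it here.
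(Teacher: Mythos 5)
Your Part~B is essentially the paper's argument: preprocess with \textsc{DSO-B} (Theorem~\ref{thm:pram:algb}), recover the edge list of a shortest path $P_{xy}$ for every pair from APSP, and compute $d_2(x,y)=\min_{e\in P_{xy}}d(x,y,e)$ with $O(n^3)$ constant-work oracle queries plus parallel min-reductions. The only cosmetic difference is that you invoke a separate $\tilde{O}(n^3)$-work APSP of~\cite{HanPR97}, whereas the paper reuses the APSP already computed inside the DSO preprocessing; this does not affect the bounds.

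Part~A, however, has a genuine gap. The algorithm of~\cite{AgarwalR16sisp} (and its CONGEST implementation, Algorithm~\ref{alg:apsisp}) requires, for every source $x$ whose shortest path to $y$ starts with tree edge $(x,a)$, the replacement distance $d(x,y,(x,a))$ avoiding that first edge; these are the Type-A values, and they also seed the Type-B candidates of the form $w(x,a)+d(a,y,(a,b))$. In a directed weighted graph these distances cannot be obtained by combining ``the best non-tree edge leaving $T_s$ near $t$'' with already-finalized second-shortest information: a directed replacement path may leave and re-enter the shortest path tree repeatedly, which is exactly why this step is done with $n$ exclude computations, each as expensive as an SSSP~\cite{DemetrescuTCR08dso}, costing $\tilde{O}(mn)$ work and $\tilde{O}(n^{1/2})$ parallel time via~\cite{CaoF23parallel}. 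Your cost accounting (APSP via~\cite{KarczmarzS21} plus local combination work) omits this ingredient entirely, and without it the procedure you describe does not compute directed 2-SiSP distances. A second, smaller issue is your proposed fix for the parallel time: batching ``all targets at a common distance level'' reorders the inner loop, whereas in Algorithm~\ref{alg:apsisp} finalization is driven by a priority queue $H_y$ per \emph{destination}, keyed by candidate path weight rather than by distance from the source, and you give no argument that level-by-level processing preserves correctness (with zero-weight edges even the ``strictly closer'' premise fails). No such modification is needed: the paper simply runs each per-destination queue sequentially, which involves at most $n$ extractions and $\tilde{O}(n)$ work per destination, and parallelizes over the $n$ destinations, already giving $\tilde{O}(n)$ parallel time and total work $\tilde{O}(mn)$.
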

\begin{proof}
    We use two different methods:

    % \vignesh{Added CONGEST algorithm to appendix, and refer to the lines here.}
    \noindent
    A. For our first algorithm, we build on the sequential algorithm of~\cite{AgarwalR16sisp}. For ease of explanation, we follow the steps of a distributed CONGEST implementation of this 2-APSiSP algorithm presented as Algorithm 6 in~\cite{ManoharanR25dsoarxiv}. For completeness, we repeat this distributed implementation in the appendix as Algorithm~\ref{alg:apsisp}.

    The APSP computation and the $n$ exclude computations used in Algorithm~\ref{alg:apsisp} can be implemented in a work-optimal manner in $\tilde{O}(mn)$ work and $\tilde{O}(n^{1/2})$ parallel time: APSP only needs $\tilde{O}(n^{1/3})$ parallel time, and implementing an exclude computation requires the same time as an SSSP computation~\cite{DemetrescuTCR08dso}. 
    
    Implementing the loop in Lines 7-13 of Algorithm~\ref{alg:apsisp} however, requires up to $n$ iterations of extracting elements from a priority queue to finalize each 2-SiSP distance. While the computation for paths ending at different vertices $y \in V$ (i.e., handling $H_y$) can be parallelized, the computation within $H_y$ seems to be sequential and requires $O(n)$ parallel time.

    The work done per vertex $y$ (for all $x \in V$) is still $\tilde{O}(n)$, and computing $d_2(x,y)$ for all $y \in V$ requires $\tilde{O}(n^2)$ work (subsumed by $\tilde{O}(mn)$). Thus, the total work is $\tilde{O}(mn)$, and the parallel time is $\tilde{O}(n)$.

    \noindent
    B. Our second approach is to use the PRAM DSO algorithms that we designed in this paper. Once a DSO is computed, the 2-SiSP distance can be computed as $d_2(x,y) = \min_{e \in P_{xy}} d(x,y,e)$, where $P_{xy}$ is an $x$-$y$ shortest path for $x,y \in V$. Thus, after computing APSP (which is done as part of DSO preprocessing) to get the edges in each path, we need $|P_{xy}| \le n$ queries to compute $d_2(x,y)$ for a single vertex pair $x,y$. Thus, 2-APSiSP requires $O(n^3)$ queries for all $O(n^2)$ vertex pairs.

    Thus, if we have a PRAM DSO algorithm with $O(1)$ work to answer any query, and preprocessing cost $P_w$ work and $P_t$ parallel time, then we obtain a 2-APSiSP algorithm with  $P_w+O(n^3)$ work and $P_t+\tilde{O}(1)$ parallel time. Using DSO-B with $P_w=\tilde{O}(n^3)$ and $P_t=\tilde{O}(1)$ proves our result. 
\end{proof}

\section{Conclusion and Open Problems} 
In this paper, we have presented the first non-trivial PRAM algorithms for DSO under single edge failure: We constructed a work-efficient algorithm for dense graphs with $\tilde{O}(n^3)$ work and $\tilde{O}(1)$ parallel time, matching APSP. We also presented a work-efficient algorithm for sparse graphs with preprocessing work $\tilde{O}(mn)$ matching sequential DSO~\cite{BernsteinK09dso} but with parallel time $\tilde{O}(n^{1/2+o(1)})$. While we have presented a work-time tradeoff algorithm that can achieve sub-$\sqrt{n}$ parallel time, it is not work-efficient. 

We have obtained work-efficient PRAM algorithms for many other graph problems in the sequential $\tilde{O}(mn)$ time fine-grained complexity class, including RPaths, 2-SiSP, MWC, ANSC, 2-APSiSP. For 2-APSiSP, we have presented two algorithms: a work-efficient $\tilde{O}(mn)$ algorithm with $\tilde{O}(n)$ parallel time, and an $\tilde{O}(n^3)$ work and $\tilde{O}(1)$ parallel time algorithm that is work-efficient for dense graphs.

Some open problems remain:

\begin{itemize}
    \item The main open problem for DSO is whether there is a work-efficient DSO preprocessing algorithm with $\tilde{O}(mn)$ work and $\tilde{O}(n^{1/3})$ parallel time, matching the current best work-efficient PRAM algorithm for APSP~\cite{KarczmarzS21}.
    \item A larger open problem is whether we can reduce the parallel time beyond $n^{1/3}$ for both APSP and DSO, while maintaining $\tilde{O}(mn)$ work. 
    \item For 2-APSiSP, can we improve the $\tilde{O}(n)$ parallel time for our work-efficient $\tilde{O}(mn)$ work algorithm?
\end{itemize}

\appendix

\section{2-APSiSP distributed implementation}

% \vignesh{Added appendix}
We include for reference a distributed algorithm for 2-APSiSP, referenced in the PRAM 2-APSiSP algorithm presented in this paper in Section~\ref{sec:apsisp}. The following Algorithm~\ref{alg:apsisp} appeared in~\cite{ManoharanR25dso} as Algorithm~6, and is a CONGEST implementation of a sequential algorithm in~\cite{AgarwalR16sisp}.

\begin{algorithm}[!h]
    \caption{\congest 2-APSiSP Algorithm (Algorithm~6 from~\cite{ManoharanR25dso})}
    \begin{algorithmic}[1]
        \Require Directed weighted graph $G=(V,E)$.
        \Ensure Compute 2-SiSP distance $d_2(x,y)$ at $y$ for each $x,y \in V$.
        \State Compute APSP in $G$ and $G^r$, remembering parent nodes. \label{alg:capsisp:apsp}
        \State  For each vertex $y \in V$, initialize $H_y = \emptyset$. $H_y$ is a priority queue for Type-B paths \textit{ending} at $y$. $H_y$ will contain keys $x \in V$ with priority $D$ representing $x$-$y$ 2-SiSP of weight $D$. \rightComment{$H_y$ is used in line~\ref{alg:capsisp:type2f}, lines~\ref{alg:capsisp:secst}-\ref{alg:capsisp:secen}.} 
        \ForAll{vertex $x \in V$} \label{alg:capsisp:initst}
            \State Find all neighbors $a$ such that edge $(x,a)$ is a shortest path. 
            \State Perform exclude computation with source $x$, and edges $(x,a)$ as the excluded set of independent paths (see Section 4 of~\cite{ManoharanR25dsoarxiv}): Distance $d(x,y,(x,a))$ is computed at $y$. \label{alg:capsisp:exclude} \rightComment{Used in lines~\ref{alg:capsisp:excludeset},\ref{alg:capsisp:type2f}}
            \State  Perform a downcast on $T_x$ (out-shortest path tree for source $x$), propagating $(x,a)$ and $w(x,a)$ to each vertex $y$ in the subtree rooted at $a$. \rightComment{Propagate first-edge info, used in lines~\ref{alg:capsisp:extendcreate},\ref{alg:capsisp:type2f}}  \label{alg:capsisp:downcast}
            \For{vertex $a$} 
            \For{vertex $y \in V$ in subtree of $T_x$ rooted at $a$} 
                \lineComment{Lines~\ref{alg:capsisp:excludeset}-\ref{alg:capsisp:initen}: Local computation at $y$. Sets Type-A distance in line~\ref{alg:capsisp:excludeset} and initializes Type-B distance computation.}
                \State Let $d_2^*(x,y) \gets d(x,y,(x,a))$ be the current estimate of 2-SiSP distance. \rightComment{Set Type-A distance, computed in line~\ref{alg:capsisp:exclude}} \label{alg:capsisp:excludeset}
                \State Add $(x,a)$ to $Extensions(a,y)$. \rightComment{Used in line~\ref{alg:capsisp:extend} for Type-B paths } \label{alg:capsisp:extendcreate}
                \lineComment{Add Type-B candidate for $x$-$y$ 2-SiSP to $H_y$.}
                \State If $(a,b)$ is the first edge of $a$-$y$ shortest path and $b \ne y$, add key $x$ to $H_y$ with weight $w(x,a) + d(a,y,(a,b))$. \rightComment{Note that $(a,b)$ has been propagated from the downcast with source $a$ from line~\ref{alg:capsisp:downcast}, and $d(a,y,(a,b))$ is computed at $y$ in line~\ref{alg:capsisp:exclude}.} \label{alg:capsisp:initen} \label{alg:capsisp:type2f}
            \EndFor
            \EndFor
        \EndFor 
        \lineComment{Lines \ref{alg:capsisp:secst}-\ref{alg:capsisp:secen}: Compute shortest Type-B 2-SiSP distance}
        \For{vertex $y \in V$} \label{alg:capsisp:secst}
        \lineComment{Local computation at $y$ to compute $d_2(x,y)$ for all $x \in V$.}
            \While{$H_y \ne \emptyset$}
            \State Let $(x,D) \gets \textsc{Extract-min}(H_y)$ \label{alg:capsisp:extract}
            \If{$D \ge d_2^*(x,y)$}
                \State Set $d_2(x,y) \gets d_2^*(x,y)$ \rightComment{Type-A path is shortest.} \label{alg:capsisp:found1}
            \Else
                \State Set $d_2(x,y) \gets D$ \rightComment{Type-B path is shortest.} \label{alg:capsisp:found}
                \For{$(x',x) \in Extensions(x,y)$}
                    \State Add key $x'$ with weight $w(x',x)+d_2(x,y)$ to $H_y$. \rightComment{Update Type-B candidate path distances for $x'$-$y$ 2-SiSP.} \label{alg:capsisp:extend} \label{alg:capsisp:secen}
                \EndFor
            \EndIf
            \EndWhile
        \EndFor
    \end{algorithmic}
    \label{alg:apsisp}
\end{algorithm}

\bibliographystyle{plainurl}
\bibliography{references}

\end{document}